\newtheorem{theorem}{Theorem}[section] 
\newtheorem{corollary}[theorem]{Corollary}
\newtheorem{lemma}[theorem]{Lemma}
\newtheorem{proposition}[theorem]{Proposition}
\newtheorem{definition}[theorem]{Definition}
\newtheorem{claim}[theorem]{Claim}
\newcommand{\val}{\mathrm{val}}
\newcommand{\eps}{\varepsilon}
\newcommand{\oline}[1]{\overline{#1}}
\newcommand{\poly}{\mathrm{poly}}
\newcommand{\scheme}{\mathcal{S}}
\newcommand{\G}{\mathcal{G}}
\DeclareMathOperator{\E}{\mathbb{E}}
\begin{document}

\title{A No-Go Theorem for Derandomized Parallel Repetition: \\ Beyond Feige-Kilian}
\author{Dana Moshkovitz 
        \thanks{
		  {\tt dmoshkov@csail.mit.edu}.} \\ MIT
\and Govind Ramnarayan \thanks{
		  {\tt govindr@csail.mit.edu}.}  \\ MIT
\and Henry Yuen  \thanks{
		  {\tt hyuen@csail.mit.edu}.} \\ MIT }
\date{}
\clearpage\maketitle

\thispagestyle{empty}

\begin{abstract}
In this work we show a barrier towards proving a randomness-efficient parallel repetition, a promising avenue for achieving many tight inapproximability results. Feige and Kilian (STOC'95) proved an impossibility result for randomness-efficient parallel repetition for two prover games with {\em small degree}, i.e., when each prover has only few possibilities for the question of the other prover. In recent years, there have been indications that randomness-efficient parallel repetition (also called \emph{derandomized parallel repetition}) might be possible for games with large degree, circumventing the impossibility result of Feige and Kilian. In particular, Dinur and Meir (CCC'11) construct games with large degree whose repetition can be derandomized using a theorem of Impagliazzo, Kabanets and Wigderson (SICOMP'12). However, obtaining derandomized parallel repetition theorems that would yield optimal inapproximability results has remained elusive.

This paper presents an explanation for the current impasse in progress, by proving a limitation on derandomized parallel repetition. We formalize two properties which we call ``fortification-friendliness'' and ``yields robust embeddings''. We show that any proof of derandomized parallel repetition achieving almost-linear blow-up cannot both (a) be fortification-friendly and (b) yield robust embeddings. Unlike Feige and Kilian, we do not require the small degree assumption. 

Given that virtually all existing proofs of parallel repetition, including the derandomized parallel repetition result of Dinur and Meir, share these two properties, our no-go theorem highlights a major barrier to achieving almost-linear derandomized parallel repetition.
\end{abstract}


%
%
%

\newcommand\remove[1]{}

\section{Introduction}

\subsection{Parallel Repetition and Almost Linear Blowup}

Two prover games are central objects of study in probabilistically checkable proofs (PCPs)~\cite{babai1990nondeterministic,Raz,khot2002power}, cryptography~\cite{ben1988multi,ben1990efficient}, and quantum computing~\cite{cleve2004consequences,reichardt2013classical}. In a two prover game $G$, two all-powerful provers coordinate their strategies and are then sent to different rooms, where they can no longer communicate. A verifier samples a pair of correlated questions $(x,y)$, and sends one question to each prover. Each prover sends back an answer, and the verifier accepts only if the pair of answers $(a,b)$ satisfy some constraint $\pi_{(x,y)}$ depending on the questions. The {\em value} of the game $G$, denoted $\val(G)$, is the probability that the verifier accepts, maximized over all prover strategies.

Parallel repetition is a natural transformation to amplify the hardness of two prover games. The \emph{$k$-fold parallel repetition} of a game $G$, denoted $G^k$, is another two prover game where the verifier picks $k$ independent pairs of questions from $G$, and sends each prover $k$ questions, corresponding to half of each of the $k$ question pairs. Each  prover sends back $k$ answers and the verifier accepts if it would have accepted all $k$ pairs of answers in the original game. Clearly, if the provers have strategies that make the verifier accept in the original game with probability $1$ (i.e., $\val(G) = 1$), then they can make the verifier accept in the $k$-fold repetition with probability $1$.
The celebrated parallel repetition theorem of Raz~\cite{Raz} shows that if the value of the game $G$ is smaller than $1$, then the value of the $k$-fold repetition, $\val(G^k)$, decays exponentially with $k$.

One of the most important applications of parallel repetition is in hardness of approximation, where it is used in reductions proving inapproximability results~\cite{Has97}. However, this application reveals a significant disadvantage of parallel repetition: the randomness complexity of the verifier in $G^k$ is $k$ times the randomness complexity of the original game $G$. This increase corresponds to a blow-up of $k$ in the exponent in reductions that are based on parallel repetition. As a result, if a reduction from \textsc{Sat} on size-$n$ inputs applies $k$-fold parallel repetition to derive an instance of a target problem, then the resulting instance of the target problem takes inputs of size $O(n^k)$. Hence, the conjectured lower bound of $2^{\Omega(n)}$ on the time needed to solve \textsc{Sat} translates at best to a time lower bound of $2^{\Omega(n^{1/k})}$ on the target problem. In applications, $k$ is often a large constant~\cite{Has97}. However, in order to obtain optimal inapproximability results for many problems,  one would like to apply parallel repetition $k$ times for all $k$'s up to $\Theta(\log n)$~\cite{BGLR,M}.


This motivates the fundamental question of whether \emph{derandomized} or \emph{randomness efficient} parallel repetition is possible: could an analogue of Raz's parallel repetition theorem hold even if the verifier does not pick $k$ question pairs independently, but rather picks $k$ \emph{correlated} question pairs? In particular, if the verifier of the original game uses $\log n$ random bits, one could hope for a verifier that uses $\log n + O(k)$ random bits to play the repeated game (as opposed to $k\log n$ random bits). If such a derandomized version of Raz's parallel repetition theorem were possible, then this would yield reductions from say, \textsc{Sat}, where a $2^{\Omega(n)}$ lower bound on \textsc{Sat} translates to a matching $2^{\widetilde{\Omega}(n)}$ lower bound on the target problem!

In~\cite{MR08}, Moshkovitz and Raz gave a hardness amplification transformation similar in spirit to parallel repetition where the transformed game uses only $(1+o(1))\log n + O(k)$ random bits. Such a blowup is referred to as ``almost linear'', and is now the gold standard for reductions. Unfortunately, the answer size in the transformation of~\cite{MR08} is exponential in $k$ rather than polynomial in $k$, and hence falls short of proving the so-called Projection Games Conjecture on optimal hardness of approximation. The parallel repetition transformation, on the other hand, gives an optimal tradeoff between the hardness of the resulting game (the \emph{soundness error}) and the answer size. This motivates the search for a derandomized parallel repetition theorem that uses $(1+o(1))\log n + O(k)$ random bits and has $O(k)$ answer bits for all $k\leq \log n$. This could prove the Projection Games Conjecture, as well prove tighter inapproximability results.

\subsection{The Feige-Kilian impossibility result}

Feige and Kilian~\cite{FK} proved an impossibility result for derandomized parallel repetition, showing that given a game $G$ satisfying two conditions called \emph{softness} and \emph{small degree}, the value of any randomness-efficient parallel repetition of $G$ is \emph{independent} of the number of repetitions. The softness condition means that if $G$ has randomness complexity $\log n$, then $G$ is \emph{$n^\eps$-soft} iff on any subset of $n^\eps$ question pairs the verifier may ask, there exists a strategy for the provers to win with probability $1$. A game has \emph{degree-$d$} if for any question of one prover, the largest number of questions for the other prover is at most $d$.
Specifically, their main result is the following:
\begin{theorem}[Feige-Kilian]
\label{thm:feige-kilian}
	Let $G$ be a two prover game with $n$ possible question pairs. If $G$ is $n^\eps$-soft and has degree $d$, then for any game $H$ that involves playing $k$ correlated instances of $G$, if the randomness complexity of the verifier of $H$ is at most $c \log n$, then the value of $H$ is independent of $k$; in particular, $\val(H) \geq (2d)^{-4 c^2/\eps^2}$.
\end{theorem}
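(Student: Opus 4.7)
The plan is to lower-bound $\val(H)$ by constructing, via an iterative pigeonhole argument, mixed strategies for the two provers that succeed with probability at least $(2d)^{-4c^2/\eps^2}$. The core insight is that bounded randomness ($c\log n$ bits, so at most $n^c$ possible seeds) combined with the degree-$d$ structure of $G$ forces enough redundancy in $H$'s question distribution that, by softness, one can ``compress'' $H$ onto a set of at most $n^\eps$ effective $G$-question-pairs per coordinate, at the cost of keeping only a $(2d)^{-4c^2/\eps^2}$ fraction of random seeds.

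The structural claim I would aim to prove is: letting $R$ denote the seed space of $H$ and $X(r),Y(r)$ the $k$-tuples of $G$-questions determined by $r\in R$, there exists $R^\star\subseteq R$ with $|R^\star|/|R|\geq (2d)^{-4c^2/\eps^2}$ and, for each coordinate $i\in[k]$, a set $S_i\subseteq\X_G\times\Y_G$ of at most $n^\eps$ $G$-question-pairs, such that $(x_i^r,y_i^r)\in S_i$ for every $r\in R^\star$. Granting this, softness supplies $G$-strategies $(\sigma_1^i,\sigma_2^i)$ winning on $S_i$ with probability $1$; concatenating them coordinate-wise yields an $H$-strategy whose acceptance probability over the verifier's seed is at least $|R^\star|/|R|$, matching the bound. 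To build $R^\star$ and $(S_i)$, I would run an iterative double pigeonhole: in each round, find a coordinate $i$ and a $G$-question for one prover that is popular among surviving seeds, then use the degree-$d$ restriction on the \emph{other} prover's value in coordinate $i$ to pigeonhole its $G$-question as well. Each round costs a factor of at most $1/(2d)$ in the surviving-seed density while shrinking the effective vocabulary in that coordinate.

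The squared exponent $4c^2/\eps^2$ suggests a nested structure: roughly $c/\eps$ rounds of fixing on each prover's side (enough to reduce the randomness from $c\log n$ down to below $\eps\log n$), with each ``fix'' incurring a $1/(2d)$ factor and with the two prover sides interacting through the degree bound, yielding $(2d)^{-O((c/\eps)^2)}$ overall. The main obstacle I anticipate is calibrating this iteration so that the two quantities we must control simultaneously---the surviving-seed density $|R^\star|/|R|$ and the per-coordinate vocabulary $|S_i|$---reach their respective targets at the same time. The delicate point is ensuring that conditioning to fix one coordinate does not blow up the vocabulary on the coordinates not yet processed; I expect this to require careful bookkeeping that leverages the degree bound to localize the effect of each round to a single coordinate, so that the budget of $(c/\eps)^2$ fixing steps is not overwhelmed by cross-coordinate coupling.
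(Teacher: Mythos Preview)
The paper does not prove this theorem; it is quoted from Feige--Kilian and only a one-paragraph sketch of the almost-linear case $c = 1+\eps$ is offered. That sketch, however, already shows where your plan goes wrong.

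Your structural claim --- a \emph{fixed} subset $R^\star \subseteq R$ of density at least $(2d)^{-4c^2/\eps^2}$ on which every coordinate sees at most $n^\eps$ question pairs --- is false already for $k=1$. Take $H = G$ with the uniform edge distribution, so $|R| = n$ and $c = 1$. Any $R^\star$ whose coordinate-$1$ question pairs lie in a set $S_1$ of size at most $n^\eps$ has $|R^\star| \le n^\eps$, i.e.\ density at most $n^{\eps-1}$, which for large $n$ is far below the constant $(2d)^{-4/\eps^2}$. (The theorem itself holds here, since $\val(G) \ge 1/d^2$ by the guessing argument below.) Your iterative pigeonhole inherits this defect: fixing a ``popular'' $x$-value in a coordinate costs a factor of order $1/|X|$ in surviving-seed density, not $1/(2d)$; the degree bound only lets you cheaply pin down the \emph{second} half of the pair once the first half is already known.

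The idea you are missing, and which the paper's sketch uses, is that each prover \emph{already sees half of every question pair} and should use that information rather than fixing $R^\star$ in advance. In the almost-linear case: prover~1, seeing $x_1$, guesses $y_1$ uniformly among the at most $d$ neighbors of $x_1$; prover~2 symmetrically guesses $x_1$. With probability at least $1/d^2$ both guesses are correct, and then both provers agree on $(x_1,y_1)$; conditioned on this pair only about $n^{1+\eps}/n = n^\eps$ seeds remain, so softness supplies coordinate-wise winning strategies for rounds $2,\ldots,k$ (round~1 is won since both know the edge). The crucial point is that the $n^\eps$-sized seed window depends on the random $(x_1,y_1)$ and is not chosen in advance; the provers can track it precisely because each holds one half of that pair. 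For general $c$, the Feige--Kilian argument iterates this guessing across several coordinates, which is where the quadratic exponent arises.
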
	
Here, we will call $G$ the \emph{base game} and $H$ the \emph{$k$-repeated game}.
Next we describe the argument of Feige and Kilian in the almost linear regime (i.e., the repetition only uses $(1 + \eps) \log n$ random bits). In this regime their argument takes an especially simple form: because the base game has small degree, the provers have constant probability to guess each other's question in the first round, and if they succeed, there are only $n^{\eps}$ possibilities for the rest of the $k - 1$ questions. For soft games the provers can succeed on all remaining questions -- thus  the provers' success probability in the repeated game does not decay with the number of repetitions $k$.

 The softness condition is satisfied by games of interest. If we assume that solving \textsc{Sat} requires more than $2^{n^\eps}$ time, then the games we apply parallel repetition to will be in general $n^\eps$-soft. The small degree condition -- while true of some games to which standard parallel repetition is applied -- is not necessarily satisfied by all games of interest. 
In other words, Feige and Kilian's impossibility result imposes a strong limitation on the possibility of derandomized parallel repetition when working in the ``small degree regime'' -- i.e., when the degree of $G$ is a constant independent of the randomness complexity or the number of repetitions -- but leaves the fascinating open question: can one obtain randomness-efficient parallel repetition for the ``large degree regime'', in which the degree of the game $G$ can depend on its randomness complexity or the desired number of repetitions. In particular, Feige and Kilian do not rule out degree that is inversely proportional to the desired value of the repeated game.

Indeed, a few works have explored this avenue towards derandomized parallel repetition. Shaltiel~\cite{Sh} considered the setting of games where the questions to each prover are uncorrelated (also known as \emph{free games}). Here, the degree is maximal, and Shaltiel managed to get a modest, albeit non-trivial, savings in randomness complexity in a repeated game. Dinur and Meir~\cite{DM} constructed games with ``linear structure'' -- which also have large degree -- and showed that a theorem by Impagliazzo, Kabanets and Wigderson~\cite{IKW} gives a certain randomness-efficient parallel repetition for them. Unfortunately, neither of these results imply new hardness of approximation results, since the reductions from \textsc{Sat} to both free games and games with linear structure generate games with randomness complexity or answer size that are very large compared to the size of the \textsc{Sat} formula.


\subsection{Our work}

This paper begins where Feige and Kilian left off: we show a barrier for derandomized parallel repetition in the large degree regime. One may hope for an analogue of Feige and Kilian's negative result for large degree games, but, unfortunately, this seems to be impossible. The reason is that in fact there \emph{are} games for which we can decrease error in a randomness efficient fashion, but without performing derandomized parallel repetition in a meaningful sense. Specifically, we can construct a high-error base game $G$ that actually ``hides'' a game $G_{low}$ for which we already know that $\val(G_{low}) \leq \delta$; if then we apply a derandomized parallel repetition procedure such as Dinur's graph powering~\cite{Dinur} to $G$, we obtain a repeated game $H$ that closely approximates $G_{low}$ and thus $\val(H) \lesssim \delta \ll \val(G)$. For more details, see Appendix~\ref{contrived_example}. Thus we've obtained derandomized error reduction, but intuitively the low error didn't come from the parallel repetition, but rather from the planted low-value game $G_{low}$.

This example shows that we can't hope to extend Theorem~\ref{thm:feige-kilian} directly to large degree games. Instead, we do the next best thing: we prove a limitation on \emph{proof techniques} for derandomized parallel repetition. We formalize two proof properties which we call ``fortification-friendliness'' and ''yields robust embeddings'', and then show that any proof of almost-linear derandomized parallel repetition cannot simultaneously be fortification-friendly and yield robust embeddings. Nearly all proofs of parallel repetition -- even derandomized parallel repetition theorems -- are fortification friendly and yield robust embeddings, including: Raz~\cite{Raz}, Shaltiel~\cite{Sh}, Dinur-Meir~\cite{DM}, Impagliazzo, Kabanets and Wigderson~\cite{IKW}, Moshkovitz~\cite{M14}, and Braverman-Garg~\cite{BG}. Therefore our results explain why their techniques have not been pushed to almost linear size.

We now discuss these two properties in more detail.


\subsubsection{Proof of parallel repetition by robust embedding}
The key step in proofs of the parallel repetition theorem is to argue that the success probability of the average coordinate $i$ of $G^k$ cannot be much larger than $\val(G)$, even when conditioned on the provers winning a significant fraction of coordinates that don't include $i$. This is proved via reduction: if this were not true, then the provers extract a strategy for $G$ from a strategy for $G^k$ by embedding $G$ into the $i$'th coordinate of $G^k$ conditioned on winning a set $C$ of coordinates. However, if $\val(G^n)$ is too large, then this strategy would succeed with probability better than $\val(G)$, a contradiction. We say that such an analysis of parallel repetition is by {\em embedding}. Furthermore, the embeddings given are \emph{robust}. By robust, we mean that embedding $G$ into a coordinate of $G^k$ is possible even when conditioning on winning \emph{any} not too large subset $C$ of coordinates. We will give a more detailed overview of this embedding technique in Section~\ref{s:embedding}.



\subsubsection{Fortification-friendly repetition schemes} Our no-go theorem covers derandomized parallel repetition theorems that can be applied to at least one \emph{fortified} game. In this case we say that the parallel repetition theorem is \emph{fortification-friendly}. Currently, there is no parallel repetition scheme that utilizes the fact that the base game is \emph{not} fortified, and hence all existing parallel repetition schemes are fortification friendly. This includes the scheme of Dinur and Meir, which we elaborate on at the end of this subsection.

Fortification is a property of games introduced in ~\cite{M14}. Roughly speaking, a $(\delta,\varepsilon)$-fortified game $G$ is one where the value of so-called ``rectangular'' subgames of $G$ that contain at least $\delta$ fraction of the questions is the same as the value of $G$ up to an additive $\varepsilon$. The paper~\cite{M14} gives a simple analysis for parallel repetition of fortified games, and furthermore showed that arbitrary games can be easily fortified by composing them with expanders.
While fortified games were defined fairly recently, they are quite natural, and, in particular, most games are fortified: see Appendix~\ref{sec:random_games_fortified}.

Importantly, existing derandomized parallel repetition theorems are fortification-friendly: Shaltiel proves a derandomized parallel
repetition for free games, which also works for fortified free games. In~\cite{DM}, Dinur and Meir first present a ``linearization'' operation that converts any game into a game with linear structure, and then prove a derandomized parallel repetition that works for \emph{any} game with linear structure. The core of this derandomized parallel repetition is the work of Impagliazzo-Kabanets-Wigderson, and the underlying derandomized parallel repetition theorem of~\cite{IKW} \emph{is} fortification friendly. This is because the result of Impagliazzo-Kabanets-Wigderson applies to all free games: (1) free games trivially have linear structure (since all possible edges are present) and (2) it is easy to construct fortified free games (e.g. choosing random constraints for a free game). Thus, our results imply limitations on what is achievable by the Impagliazzo-Kabanets-Wigderson derandomized parallel repetition, and hence what is achievable by the Dinur-Meir result.

%
\subsubsection{Informal Theorem Statement and Discussion}

We are now ready to state our main theorem informally. For a formal statement, see Theorem~\ref{main_thm}.

\begin{theorem}[Main theorem, informal statement]
\label{main_thm_informal}
Let $\scheme$ be a parallel repetition scheme that transforms any \emph{base game} $G$ to a $k$-repeated game $\scheme(G)$ in which a verifier asks $k$ (possibly correlated) questions from $G$ in parallel. Suppose that $G$ is $(\delta,\varepsilon)$-fortified for sufficiently small\begin{footnote}{The required $\delta$ depends on the blowup in $\scheme(G)$. For $|\scheme(G)| = |G|\cdot\poly\log|G|$, we need $\delta = 1/\poly\log|G|$. For $|\scheme(G)| = O(|G|)$, we need a sufficiently small constant $\delta$.}\end{footnote} $\delta= |G|^{-o(1)}$, and $\varepsilon = O(1-\val(G))$; and that $|\scheme(G)| = |G|^{1+o(1)}$.
Then there is no proof of parallel repetition by robust embedding for $\scheme(G)$.
\end{theorem}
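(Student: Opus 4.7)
The plan is to suppose for contradiction that such a proof exists and then exhibit a concrete strategy for $\mathcal{S}(G)$ whose coordinate-wise conditional success probabilities are so large that no robust-embedding extraction can be consistent with the value bound $\val(G)$ allowed by fortification. A proof by robust embedding must, by definition, give an extraction that turns any strategy $\sigma$ for $\mathcal{S}(G)$ into a strategy $\tau$ for $G$ whose value is (up to additive slack) at least the conditional success probability of $\sigma$ on some coordinate $i$ given that $\sigma$ wins a not-too-large coordinate set $C$; robustness says this works for essentially every choice of $(i,C)$. So the aim is to find $\sigma^\star$, $i$, and $C$ such that the conditional success probability is close to $1$ while every possible extracted $\tau$ has $\val(\tau) \le \val(G) + O(\varepsilon)$.

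The construction of $\sigma^\star$ exploits the almost-linear blowup $|\mathcal{S}(G)| = |G|^{1+o(1)}$ to locate a small planted rectangle. Since the total number of question tuples in $\mathcal{S}(G)$ is only $|G|^{1+o(1)}$, a pigeonhole / entropy argument on the coordinate-$i$ marginals produces subsets $S_X \subseteq X$ and $S_Y \subseteq Y$ with $\mu(S_X \times S_Y) \le \delta$ (hence strictly \emph{below} the fortification threshold) together with a modest-size event $C$ on the other coordinates such that conditioning on $C$ forces the coordinate-$i$ question pair into $S_X \times S_Y$ with probability $1-o(1)$. Because $S_X \times S_Y$ sits below the fortification threshold, nothing prevents a rectangular pair of answer functions $(f_x, f_y)$ from winning $G|_{S_X \times S_Y}$ with probability essentially $1$ --- such an $(f_x, f_y)$ can be chosen greedily on the constraint structure restricted to $S_X \times S_Y$. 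Lift $(f_x, f_y)$ coordinatewise (with an arbitrary fixed default outside $S_X \times S_Y$) to obtain $\sigma^\star$ for $\mathcal{S}(G)$. By construction, $\sigma^\star$ wins coordinate $i$ with probability $1-o(1)$ conditioned on winning $C$.

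The contradiction then comes from analyzing any $\tau$ that a robust embedding could produce from $\sigma^\star$. A robust embedding necessarily builds $\tau$ by taking an input $(x,y)$ for $G$, splicing it into coordinate $i$ of a sample from the $\mathcal{S}(G)$-distribution conditioned on winning $C$, and outputting $\sigma^\star$'s coordinate-$i$ answers. Since $\sigma^\star$'s coordinate-$i$ answers are literally $(f_x(x),f_y(y))$ whenever $(x,y)\in S_X\times S_Y$ and default otherwise, the resulting $\tau$ is itself essentially rectangular on $G$ with ``active'' support on the tiny subgame $S_X\times S_Y$. Its value on the full distribution of $G$ is therefore at most
\[
\mu(S_X\times S_Y) \cdot 1 \;+\; \bigl(1-\mu(S_X\times S_Y)\bigr)\cdot \val(G) \;\le\; \val(G) + \delta.
\]
Since $\delta=|G|^{-o(1)}$ and $\varepsilon=O(1-\val(G))$, this quantity is bounded well below the conditional success $1-o(1)$ that the robust embedding is contractually required to (nearly) attain, giving the desired contradiction.

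The main obstacle I expect to grapple with is the combined combinatorial step of locating the rectangle $S_X\times S_Y$ \emph{and} a conditioning event $C$ of size small enough to fall within the ``not-too-large'' regime required by robust embedding, while also having the conditioning probability $\Pr[\sigma^\star \text{ wins } C]$ bounded below so that ``conditioned on winning $C$'' is a legitimate event. The natural route is: start from the almost-linear support bound, apply a Shearer/entropy-style inequality to single-coordinate marginals to locate a rectangular hot spot of mass at least $|G|^{-o(1)}$, and then pick $C$ to be exactly the coordinates whose winning isolates this hot spot. Formalizing the robust-embedding definition so that the rectangularity of $\tau$ truly follows from the rectangularity of $\sigma^\star$'s coordinate-$i$ actions -- i.e.\ ruling out ``clever'' embeddings that reconstruct something other than $\sigma^\star$'s local behavior -- is where the hypothesis of a robust-embedding proof does the real work, and is the most delicate point in the argument.
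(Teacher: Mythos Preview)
Your proposal has the direction of the argument essentially reversed relative to the paper, and this reversal introduces two genuine gaps.

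First, your use of fortification is backwards. You want to find a rectangle $S_X\times S_Y$ \emph{below} the fortification threshold and then say ``nothing prevents a rectangular pair of answer functions $(f_x,f_y)$ from winning $G|_{S_X\times S_Y}$ with probability essentially $1$.'' But $(\delta,\varepsilon)$-fortification says nothing about sub-threshold rectangles; it only says that rectangles \emph{above} the threshold have value close to $\val(G)$. What you need here is a \emph{softness} assumption (small subgames are perfectly winnable), which is precisely the Feige--Kilian hypothesis the paper is trying to avoid. For the concrete fortified games the paper works with (e.g.\ random constraints on an expander), small rectangles are typically \emph{not} nearly satisfiable, so your greedy $(f_x,f_y)$ need not exist. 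The paper instead goes the other way: it uses the robust embedding together with the small blowup to manufacture a \emph{large} rectangle (of density $\gtrsim 1/(z\cdot\mathrm{polylog}\,z)$ where $z=|H|/|G|$) that is nearly fully satisfied, and \emph{that} is what contradicts fortification.

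Second, your reading of ``robust embedding'' is stronger than the paper's definition. You treat it as a contract that the extracted $\tau$ must achieve value close to $\Pr[W_{\{i\}}\mid W_C]$. The paper's Definition~\ref{def:robust_reduction} only guarantees an embedding map $\mathsf{Emb}(x,y)=(f_X(x),f_Y(y))$ with $\mathsf{Emb}(x,y)_i=(x,y)$ and $\Pr_{(x,y)\in E}[\mathsf{Emb}(x,y)\in W_C]\ge 1-\varepsilon$; it does \emph{not} promise that $\mathsf{Emb}$ lands in $W_{\{i\}}\cap W_C$, so there is no automatic lower bound on $\val(\tau)$ in terms of $\Pr[W_{\{i\}}\mid W_C]$. (The paper even notes that the embedding is necessary but not sufficient for the full Raz-style reduction.) The paper's actual argument never uses such a value guarantee for $\tau$: it takes $C=\{s\}$ with the trivial ``play optimally in round $s$'' strategy, and then exploits the \emph{product structure} of $\mathsf{Emb}$ plus uniform marginals plus $|H|\approx |G|$ to project the image of $\mathsf{Emb}$ onto coordinate $s$ and extract a large rectangular subgame $G_{M_s\times N_s}$. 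Since $\mathsf{Emb}$ lands mostly in $W_{\{s\}}$, the round-$s$ component of $\psi^H$ satisfies almost all edges of $G_{M_s\times N_s}$, contradicting fortification. Your pigeonhole step, your construction of $C$, and your claimed contradiction all rest on the stronger interpretation and on softness, so the argument as written does not go through.
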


Note that unlike the result of Feige and Kilian, our impossibility result is not limited to small degree games. In fact, fortification typically involves composing the game with a degree-$O(1/\delta)$ expander, thereby making the degree of the base game large.

\subsection{The way forward}

Despite many years of research on the subject of derandomized parallel repetition, obtaining a parallel repetition with both an exponential decay of the error and almost-linear blowup has resisted attack. The work of Dinur and Meir makes partial progress towards this goal, but -- not only it admits polynomial decay of the error and a large polynomial blowup -- it also goes through a costly ``linearization'' operation that deteriorates the parameters of the game, so it does not achieve any new results for PCP. 

We view our theorem as an explanation for the lack of progress towards the goal of derandomized parallel repetition. It shows that any proof of a derandomized parallel repetition theorem must do at least one of the following: (1) Use that the base game is \emph{not} fortified; (2) Not yield a robust embedding; and/or (3) Have a large polynomial blowup. As discussed earlier, virtually all proofs of parallel repetition do not satisfy (1) and (2). We now discuss prospects for being able to achieve (1), (2), or (3).




\paragraph{Using that the base game is \emph{not} fortified.} Is it too restricting to require that the scheme accepts a base game is fortified? We believe not, there are no known parallel repetition techniques that take advantage of the base game not being fortified. Intuitively, it seems unlikely that such a technique would help with derandomized parallel repetition, since fortification is known to facilitate parallel repetition, and composition with expanders is intuitively useful for derandomization.

\paragraph{Circumventing robust embeddings.} Again, proving parallel repetition via robust embeddings (either explicitly or implicitly) is a ubiquitous strategy. Interestingly, one approach that does not fall into the embedding framework is the randomness-efficient amplification of Moshkovitz and Raz~\cite{MR08}. They construct codes with local testers/decoders {\em that have low error}, and incorporate randomness efficient sequential repetition on the decoded symbols. 
Their technique is based on an algebraic construction of codes and the error it obtains, while low, is not low enough to prove the Projection Games Conjecture. Decreasing the error of local testers/decoders does not seem any easier than randomness-efficient error reduction for games.

\paragraph{Polynomial blowup.} Finally, our impossibility result pertains to repetitions with almost linear blowup. As we mentioned, such a blowup is currently the gold standard in PCP, and larger blowups correspond to weaker inapproximability results. Nonetheless, both the results of Shaltiel and Dinur-Meir have larger blowups. Shaltiel has a blowup that is not much smaller than standard parallel repetition, and Dinur-Meir have a polynomial blowup.

\section{Games and parallel repetition schemes}

We will use the notation $\oline{x}$ to denote tuples $(x_1,\ldots,x_k)$. For convenience of notation, we will call two sets $\varepsilon$-close if the uniform distributions on these sets are $\varepsilon$-close in total variation distance.

\paragraph{Games and strategies.} A \emph{two-prover one-round game} $G$ is specified by a tuple $(X,Y,E,\pi, \Sigma)$ where $X \times Y$ is the vertex set of a bipartite graph with edge set $E \subseteq X \times Y$, $\pi$ is a set of constraints $\pi_e \subseteq \Sigma \times \Sigma$ for each edge $e \in E$, and $\Sigma$ is a finite alphabet. The \emph{value} of a game $G$ is defined as
$$
	\val(G) := \max_{\psi_X,\psi_Y} \Pr_{(x,y) \in E} \left [ (\psi_X(x), \psi_Y(y)) \in \pi_{(x,y)} \right]
$$
where the maximum is taken over all functions $\psi_X: X \to \Sigma$ and $\psi_Y: Y \to \Sigma$, and the probability is over a uniformly random edge in $E$. We will use caligraphic $\mathcal{G}$ to denote the \emph{graph underlying $G$}, which is the bipartite graph $(X,Y,E)$. The \emph{size} of a game $G$, which we will denote by $|G|$, is defined to be the number of edges $|E|$. For a pair of maps $\psi_X : X \to \Sigma$ and $\psi_Y: Y \to \Sigma$, we call $\psi = (\psi_X,\psi_Y)$ a \emph{strategy} for $G$. For $(x,y) \in X \times Y$, we will write $\psi(x,y)$ to denote the pair $(\psi_X(x),\psi_Y(y))$. If the maximum degree of $\G$ is $d$, then we say that $G$ is a \emph{degree-$d$ game}.

\paragraph{$k$-fold parallel repetition.} The \emph{$k$-fold parallel repetition} of a game $G$ is a new game $G^k = (X^k, Y^k, E^k, \pi^k, \Sigma^k)$, where $X^k$, $Y^k$, $E^k$, and $\Sigma^k$ denote the $k$-fold Cartesian products of $X$, $Y$, $E$ and $\Sigma$ respectively, and $\pi^k$ denotes the set of constraints $\pi_{e_1} \times \pi_{e_2} \times \cdots \times \pi_{e_k}$ for every $\oline{e} = (e_1,\ldots,e_k) \in E^k$. Intuitively, in $G^k$, the verifier will sample $e_1 = (x_1,y_1), \ldots,e_k = (x_k,y_k)$ uniformly and independently at random from $E$, and send $\oline{x} = (x_1,\ldots,x_k)$ and $\oline{y} = (y_1,\ldots,y_k)$ to the first and second prover, respectively. The provers win the repeated game $G^k$ if they win $G$ in all rounds -- i.e.,~provide answers $(a_1,\ldots,a_k)$ and $(b_1,\ldots,b_k)$ from $\Sigma^k$ such that for $i \in [k]$, $(a_i,b_i) \in \pi_{e_i}$. 

\paragraph{Subgames.} Let $G = (X,Y,E,\pi,\Sigma)$ be a game. Then we say a game $G' = (X',Y',E',\pi',\Sigma')$ is a \emph{subgame of $G$} if $X' \subseteq X$, $Y' \subseteq Y$, $E' \subseteq E \cap (X' \times Y')$, $\pi' = \{ \pi_e : e \in E', \pi_e \in \pi \}$, and $\Sigma' = \Sigma$; we denote this by $G' \subseteq G$. For a subset $E' \subseteq E$, we will let $G_{E'} = (X,Y,E',\pi,\Sigma)$ denote the \emph{subgame of $G$ induced by $E'$}. Notice that the question set, constraints and alphabet of a subgame induced by a set of edges are the same as that of the original game. The only difference is that, in the subgame, we only select a subset of the question pairs that the verifier can ask, and the constraints are induced by the subset of questions.

For convenience, when the game $G = (X,Y,E,\pi,\Sigma)$ is understood from context, we will treat $G$ as the set of edges $E$; e.g., we will write $(x,y) \in G$ to denote $(x,y) \in E$.

\paragraph{Parallel repetition schemes} Let $G = (X,Y,E,\pi,\Sigma)$ be a game, and let $k > 0$ be an integer. Then we say any subgame $H = (X^k,Y^k,E_H,\pi^k,\Sigma^k) \subseteq G^k$ where $E_H \subseteq E^k$ is a \emph{$k$-repeated game}, with $G$ as the \emph{base game}. If $|H|$ is strictly smaller than $|G|^k$, then we say that $H$ is a \emph{derandomized} $k$-repeated game.

A $k$-\emph{parallel repetition scheme} $\scheme$ is a black box procedure for converting a base game $G$ to a $k$-repeated game $H \subseteq G^k$. In this paper, we will use the shorthand $\scheme = \{ G \to H \subseteq G^k \}$ to succinctly describe the scheme $\scheme$, where we implicitly assume the transformation  $G \to H$ is described by an algorithm that runs in time polynomial in the description of the input game, as well as $k$. Whenever a parallel repetition scheme (or simply a \emph{repetition scheme}) $\scheme$ is understood from context, $H$ will always refer to the $k$-repeated game that is the scheme $\scheme$ applied to some base game $G$. We will also use $\scheme(G)$ to denote the scheme applied to $G$.

We say that a $k$-parallel repetition scheme $\scheme = \{ G \to H \subseteq G^k \}$ satisfies the \emph{uniform marginals property} if for all games $G = (X,Y,E, \pi,\Sigma)$, the marginal distribution of questions sampled from $H = (X^k,Y^k,E_H,\pi^k,\Sigma^k) = \scheme(G)$ in any single coordinate is the same as the distribution of questions in $G$. Namely, for all coordinates $j \in [k]$ and any fixed edge $(x,y) \in E$, we have that
	\[ \Pr_{(\oline{x},\oline{y}) \in E_H} \left[ (\oline{x}_j, \oline{y}_j) = (x,y) \right] = \frac{1}{|E|}. \]
	
	The uniform marginals property is an extremely mild and natural condition, which holds for all existing parallel repetition schemes. In fact, this condition even seems morally necessary for parallel repetition, as it says that each coordinate of the repeated game $H$ should look like the base game $G$, which is what we expect of a repeated game.
	
Finally, we define the \emph{size blowup} of a scheme $\scheme$ to be a function $\Phi_{\scheme,\Sigma}: \mathbb{N} \to \mathbb{R}$ defined as
$$
	\Phi_{\scheme,\Sigma} (n) := \max_{G : |G| = n} \frac{|H|}{|G|}
$$

where the maximum is over all base games $G$ with $n$ question pairs and answer alphabet $\Sigma$, and $H$ denotes the scheme $\scheme$ applied to $G$. Note that the number of games with $n$ question pairs and answer alphabet $\Sigma$ is finite\footnote{We assume that in the transformation from base game $G$ to $k$-repeated game $H$, the scheme does not care about the actual labels of the questions, and what only matters are the correlations between questions, as captured by the edge set $E$ of the base game $G$. This is consistent with existing parallel repetition schemes.}. The size blowup of a scheme captures the blowup in randomness complexity in the following way: if the base game $G$ has randomness complexity $\log n$ and the $k$-repeated game $H$ has randomness complexity at most $\log n + \ell(n)$, then the size blowup $\Phi_{\scheme,\Sigma} (n) \leq 2^{\ell(n)}$.

\paragraph{Winning in a set of coordinates.} For any $k$-repeated game $H$, any strategy $\psi$ for $H$, and any subset of coordinates $C \subseteq [k]$, let $W_C^{\psi}$ denote the subgame of $H$ consisting of all question pairs $(\oline{x},\oline{y}) \in H$ such that $(\oline{a},\oline{b}) = \psi(\oline{x},\oline{y})$ satisfies $(\oline{a}_i,\oline{b}_i) \in \pi_{\oline{x}_i,\oline{y}_i}$ for all $i \in C$. In other words,$W_C^{\psi}$ is the set of all question pairs in $H$ where the strategy $\psi$ is able to succeed in all the coordinates of $C$. We call $W_C^{\psi}$ the \emph{subgame where $\psi$ wins in $C$}. When the strategy is $\psi$ is understood from context, we will omit $\psi$ and simply write the subgame as $W_C$.

%
%

\section{Parallel repetition via embeddings}\label{s:embedding}

\newcommand{\emb}{{\sf Emb}}

In this section, we formalize the notion of an embedding as described in the introduction and expand on how it is used to prove parallel repetition. First, we will motivate the idea of embedding by giving a high level and informal discussion of proofs of parallel repetition. Then, we will formally define the notion of a \emph{robust embedding} that we will use in this paper. The idea of robust embeddings is implicit in nearly all proofs of parallel repetition: to illustrate this, we show how it is implicit in the Raz-Holenstein proof in Appendix A.

As alluded to in the introduction, most proofs of parallel repetition proceed via \emph{reduction}: the value of the repeated game $G^k$ is related to the value of the base game $G$ by exhibiting a transformation that takes a ``too good'' strategy for the repeated game $G^k$ and constructs a ``too good'' strategy for the base game $G$. Furthermore, this transformation is black box, in the sense that it works for arbitrary games $G$ and their parallel repetitions. 

How might such a generic transformation work? Intuitively, it seems that one must have a generic way of identifying a substructure within a hypothetical too-good-to-be-true strategy $\psi$ for the repeated game $G^k$, a strategy $\varphi$ for the base game $G$ that succeeds with too-high probability (i.e., strictly greater than $\val(G)$, which would be a contradiction). Since our only constraint on $G^k$ is that it's comprised of $k$ independent copies of $G$, it seems that we have to identify a strategy for $G$ within substructures of $\psi$ that respect this constraint. 

Under $\psi$, we have that $\Pr[W_{[k]}]$, the probability of winning all rounds, is too large. Thus, we can use Bayes' rule to split it into conditional probabilities that respect the coordinate structure of $G^k$. It is not hard to see that, assuming $\Pr[W_{[k]}]$ is too large, then there exists a set of coordinates $C \subset [k]$ such that for many rounds $i \in [k] \backslash C$, we have that $\Pr[W_{\{i\}} | W_C] \gg \val(G)$, where $W_{\{i\}}$ denotes the event of winning round $i$ and $W_C$ denotes the event of winning all the rounds in $C$. Thus for each such $i$ it appears that we have identified candidate substructures inside $\psi$ (namely, the event $W_C$) within which we hope to extract a too-good-to-be-true strategy for $G$ (namely, by using a strategy for the $i$th round within the event $W_C$). Thus, we would like to ``play'' a copy of $G$ in the $i$th round of $W_C$, and obtain success probability that is close to $\Pr[W_{\{i\}} | W_C]$, which would be too good to be true. The constructed strategy $\varphi$ will attempt to ``play'', or \emph{embed}, the questions of $G$ into the $i$'th round of $W_C$ (which we also think of as a subgame of $G^k$). 

We call this natural proof strategy  a \emph{proof of parallel repetition by robust embedding}. This proof strategy forms the basis of most parallel repetition proofs, including existing proofs of derandomized parallel repetition, and one might expect that future derandomized parallel repetition theorems might be proved along these lines. We formalize this notion by defining the property of having a \emph{robust embedding} of a game $G$ into a $k$-repeated game $H$.


Let $G = (X_G,Y_G,E_G,\pi_G,\Sigma_G)$ and $H = (X_H,Y_H,E_H,\pi_H,\Sigma_H)$ be games. We say the map $\emb: X_G \times Y_G \to X_H \times Y_H$ is an \emph{embedding map from $G$ to $H$} (or simply an \emph{embedding map}) iff there exist maps $f: X_G \to X_H$ and $g: Y_G \to Y_H$ such that for all $(x,y) \in X_G \times Y_G$ we have $\emb(x,y) = (f(x),g(y))$. 

\begin{definition}[Robust embedding into a repeated game]
\label{def:robust_reduction} 
Let $G = (X,Y,E,\pi,\Sigma)$ be a game and let $H \subseteq G^k$ be a $k$-repeated game. Let $\gamma: [k] \to \mathbb{R}$ be a function. We say that $G$ has a $(\gamma,\eps)$-\emph{robust embedding into a coordinate of $H$} iff for all strategies $\psi_H$ for $H$ and subsets $C \subseteq [k]$, there exists an  $i \in [k] \backslash C$, there exists an embedding map $\emb: X \times Y \to X^k \times Y^k$ such that
\begin{enumerate}
	\item (\textbf{Coordinate embedding}) For all $(x,y) \in X \times Y$, we have that $(\oline{x},\oline{y}) = \emb(x,y)$ satisfies $\oline{x}_i = x$ and $\oline{y}_i = y$.
	\item (\textbf{Robustness}) If $\Pr_{\oline{e} \in E_H} [ \oline{e} \in W_C] \geq \gamma(|C|)$, then $\Pr_{e \in E} [ \emb(e) \in W_C ] \geq 1 - \eps$.
\end{enumerate}
where $E_H$ denotes the questions in $H$, and $W_C$ denotes the subgame of $H$ where $\psi_H$ wins in $C$.
\end{definition}

We use the term ``robust'' because there is an embedding from $G$ into $W_C$ for \emph{every} $C$ such that $\Pr[W_C]$ is sufficiently large. This is reminiscent of the robustness properties of pseudorandom objects such as expanders or extractors, where we have guarantees for \emph{every} sufficiently large subset of a graph (in the case of expanders) or distribution with sufficiently large min-entropy (in the case of extractors). \emph{A priori}, $G$ may not have a robust embedding into a repeated game $H \subseteq G^k$ because there may exist large $W_C \subseteq H$ that, intuitively, does not contain a copy of $G$. 

Our definition of robust embedding is heavily inspired by the Raz-Holenstein proof of the parallel repetition theorem. In Appendix~\ref{sec:raz-hol}, we explicitly describe how the Raz-Holenstein proof directly implies the existence of a robust embedding of $G$ into a coordinate of $G^k$. 

Although the main result of our paper does not unconditionally rule out derandomized parallel repetition, we do the next best thing: we rule out a particular \emph{proof technique} for proving derandomized parallel repetition, and in fact, a very natural one.

\newcommand{\ensemble}{\mathcal{E}}

\section{Our no-go theorem}


Our main theorem is the following:

\begin{theorem}[Main Theorem]
\label{main_thm}
Let $\Sigma$ be a finite alphabet. Let $\scheme = \{G \to H \subseteq G^k \}$ be a parallel repetition scheme that satisfies the uniform marginals property and has size blowup $\Phi_{\scheme, \Sigma}(n) \leq O(n^{0.49})$. Then for all $n > 0$, $\eps \in (0, 1/23)$, $\delta \leq (16 \Phi_{\scheme,\Sigma}(n) \log^2 (\Phi_{\scheme,\Sigma}(n) ))^{-1}$, an integer $d$, and for all games $G$ satisfying:
	\begin{enumerate}
		\item The graph $\mathcal{G} = (X \times Y,E)$ underlying $G$ is $d$-regular, and has at most $\eps |E|$ parallel edges.
		\item For all $S \subseteq X,T \subseteq Y$ with $|S| \geq \delta |X|,|T| \geq \delta |Y|$, we have
		$$
			\left | \frac{| E \cap (S \times T) |}{|S| |T|} - \frac{d}{|Y|} \right | \leq  \eps \frac{d}{|Y|}.	
		$$
		\item $\mathrm{val}(G) \leq 1 - 20\eps$.
		\item $G$ is $(\delta,\eps)$-fortified.
	\end{enumerate}
there does not exist a  $(\gamma,\eps)$-robust embedding of $G$ into a coordinate of $H$ for all $\gamma \leq \val(G)$, $\eps < (1 - \gamma)/23$, where $H = \scheme(G)$.
\end{theorem}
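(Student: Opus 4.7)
The plan is a proof by contradiction: assume a $(\gamma,\eps)$-robust embedding of $G$ into a coordinate of $H=\scheme(G)$ exists. I will exhibit a strategy $\psi_H$ and a subset $C\subseteq[k]$ that trigger the embedding guarantee and force a rectangular subgame of $G$ on which $\phi$ wins with too-high probability, contradicting fortification. Let $\phi$ be an optimal strategy for $G$, and take $\psi_H$ to be the ``copy strategy'' $\psi_X(\bar{x})_j:=\phi_X(\bar{x}_j)$ and $\psi_Y(\bar{y})_j:=\phi_Y(\bar{y}_j)$. By the uniform marginals property of $\scheme$, $\Pr[W_{\{j_0\}}^{\psi_H}]=\val(G)\geq\gamma$ for any single coordinate $j_0\in[k]$, so taking $C=\{j_0\}$ triggers the embedding. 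This produces an index $i\neq j_0$ and product maps $f,g$ (with $f_i=g_i=\mathrm{id}$) such that
\[
   \Pr_{(x,y)\in E_G}\bigl[\phi\text{ wins on }(f_{j_0}(x),g_{j_0}(y))\bigr]\;\geq\;1-\eps.
\]

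Set $A:=f_{j_0}(X)$, $B:=g_{j_0}(Y)$ and $R:=A\times B$, and let $G_R$ denote the subgame of $G$ induced by $E_G\cap R$. The image of $(f_{j_0},g_{j_0})$ lies in $E_G\cap R$, so the condition above places at least $1-\eps$ pushforward mass on $W_\phi\cap R$, the winning edges of $\phi$ in $R$. Provided $|A|\geq\delta|X|$ and $|B|\geq\delta|Y|$, fortification gives $\val(G_R)\leq\val(G)+\eps=1-19\eps$, hence $|W_\phi\cap R|\leq(1-19\eps)|E_G\cap R|$. I would then use pseudorandomness of $G$ (which controls the density of $E_G$ in every sufficiently large sub-rectangle) to show that the pushforward of $\mathrm{Unif}(E_G)$ under $(f_{j_0},g_{j_0})$ is close to $\mathrm{Unif}(E_G\cap R)$ up to multiplicative $1\pm O(\eps)$ error, so that the $1-\eps$ mass bound forces $|W_\phi\cap R|\geq(1-O(\eps))|E_G\cap R|$. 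Combining yields $1-O(\eps)\leq 1-19\eps$, which fails for $\eps$ small enough; the factor $23$ in the hypothesis $\eps<(1-\gamma)/23$ should track the constants lost in this chain.

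The remaining work is to show that $R$ really is a large rectangle and that the pushforward is close to uniform on $E_G\cap R$. For the size of $R$, the embedded set $\emb(E_G)$ consists of essentially $|E_G|$ edges of $E_H$ whose $j_0$-coordinate lies in $R$; the uniform marginals of $\scheme$, the bound $|E_H|\leq\Phi|E_G|$, and pseudorandomness of $G$ should combine via a counting argument to give a product bound $|A|\cdot|B|\gtrsim|X|\cdot|Y|/\Phi$. Upgrading this to the individual lower bounds $|A|\geq\delta|X|$, $|B|\geq\delta|Y|$ is where I expect the quantitative $\delta\leq(16\Phi\log^2\Phi)^{-1}$ hypothesis to enter, probably through a dyadic decomposition over possible sizes of $f_{j_0}(X)$ and $g_{j_0}(Y)$ across candidate coordinates (explaining the $\log^2\Phi$) and, if $C=\{j_0\}$ is insufficient, a pigeonhole over a slightly larger $C$. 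The most delicate step, and the main obstacle I anticipate, is ruling out degenerate embeddings $(f_{j_0},g_{j_0})$ with very unbalanced fibers: such an embedding could in principle concentrate the pushforward onto a tiny subset of $W_\phi\cap R$ with no implication for $\val(G_R)$, and ruling this out requires a careful use of pseudorandomness on the preimage sub-rectangles $f_{j_0}^{-1}(x')\times g_{j_0}^{-1}(y')$ together with the size-blowup constraint to bound how concentrated the fibers of $(f_{j_0},g_{j_0})$ can be.
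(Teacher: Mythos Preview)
Your high-level strategy matches the paper's: take $C=\{s\}$ with the coordinate-wise optimal strategy so that $\Pr[W_{\{s\}}]=\val(G)\geq\gamma$, obtain the embedding $\emb=(f_X,f_Y)$ into some coordinate $i\neq s$, project onto coordinate $s$, and find a large rectangular subgame on which round $s$ of $\psi^H$ wins too often, contradicting fortification. You also correctly identify the central obstacle: writing $f_s$ for the $s$-th component of $f_X$ (your $f_{j_0}$), the fibers of $f_s$ and $g_s$ may be very unbalanced, so the pushforward of the uniform distribution on $E$ under $(f_s,g_s)$ need not be close to uniform on $E\cap(A\times B)$.

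However, you locate the $\log^2\Phi$ loss in the wrong place, and your proposed fix for the fiber problem does not work as stated. The individual bounds $|A|\geq\delta|X|$, $|B|\geq\delta|Y|$ are actually cheap: uniform marginals say each base-game edge is the $s$-coordinate of at most $z=|E_H|/|E|$ edges of $E_H$, and combined with $d$-regularity this already forces $|A|\geq(1-\eps)|X|/z$, far above $\delta$, with no dyadic argument and no need for larger $C$. On the other hand, applying pseudorandomness to the preimage rectangles $f_s^{-1}(x')\times g_s^{-1}(y')$ fails because individual fibers can have size well below $\delta|X|$ (even size $1$), so the expansion hypothesis simply does not apply to them; you cannot conclude that the pushforward is close to uniform on $E\cap R$ this way.

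The paper handles the fiber problem from the other direction. Since $\emb$ is the identity on coordinate $i$, the image $\widehat{W}=\emb(E)$ is (modulo parallel edges) an isomorphic copy of $\mathcal{G}$ inside $X^k\times Y^k$. Using this and uniform marginals, one shows that at most a $2\eps$-fraction of edges in $W$ touch vertices whose $s$-fiber has size exceeding $2z$; after discarding those, all remaining fiber sizes lie in $[1,2z]$. A dyadic bucketing over these fiber sizes --- \emph{this} is where the $\log^2 z\leq\log^2\Phi$ loss enters --- isolates a sub-rectangle $S\times T$ on which all $X$-fibers lie in $[w_x^*,2w_x^*]$ and all $Y$-fibers in $[w_y^*,2w_y^*]$. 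Finally, one selects one representative per fiber to obtain $M\subseteq S$, $N\subseteq T$ on which the $s$-projection is \emph{injective}; now $M_s\times N_s$ is a genuine rectangle of density $\gtrsim 1/(z\log^2 z)\geq\delta$, and expansion is applied to $M_s\times N_s$ itself (not to individual fibers) to compare $|W'\cap(M\times N)|$ with $|E\cap(M_s\times N_s)|$ and derive the contradiction with fortification. The missing idea in your proposal is precisely this passage to an injective sub-selection, which replaces ``pushforward close to uniform'' by ``pushforward \emph{is} a bijection onto its image''.
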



The most significant implication of our Main Theorem is that a parallel repetition scheme satisfying the uniform marginals property that (1) can be applied to a single game $G$ with the above properties and (2) yields a robust embedding cannot achieve almost linear blowup. As we elaborate below, by far the most pertinent property of $G$ is that it is sufficiently \emph{fortified}. Hence, a parallel repetition scheme attempting to achieve almost linear blowup should explicitly take advantage of the fact that its input is not fortified. Below, we discuss the properties we require of $G$, and why games that satisfy these properties are quite natural, which makes this barrier nontrivial to overcome.

 Fortification is a property of games introduced by Moshkovitz~\cite{M14}, who gave a simple proof that fortified games satisfy parallel repetition, and furthermore showed that arbitrary (projection) games can be easily fortified. Roughly speaking, a fortified game $G$ is one where the value of every not-too-small \emph{rectangular subgame} of $G$ (i.e., a subgame of $G$ played on a subgraph of $\mathcal{G}$ induced by a set of vertices). More formally:
\begin{definition}[Fortified Games]
\label{def:fortified_games}
Let $G = (X,Y,E,\pi,\Sigma)$ be a game. We say that $G$ is \emph{$(\delta, \eps)$-fortified} iff for all $S \subseteq X$, $T \subseteq Y$ with $|S| \geq \delta |X|$ and $|T| \geq \delta |Y|$, we have that
$$
	\val(G_{S \times T}) \leq \val(G) + \eps
$$
where $G_{S \times T} \subseteq G$ denotes the rectangular subgame of $G$ on the subgraph induced by the vertex set $S \times T$.
\end{definition} 
One might be cautious that we require $G$ to be $(\delta, \eps)$-fortified with potentially polynomially small $\delta$. However, many natural games on which parallel repetition theorems apply \emph{are} this fortified. For example, we demonstrate in Appendix~\ref{sec:random_games_fortified} that games that are randomly sampled from a distribution of games on regular bipartite graphs are heavily fortified. There are even heavily fortified games with linear structure: it follows from Claim~\ref{random_makes_fortified} in Appendix~\ref{sec:random_games_fortified} that by taking a free game and randomly sampling constraints, we get a game that satisfies all the requirements of Theorem~\ref{main_thm}. Since all free games have linear structure, this also has linear structure. Hence, our barrier even applies to the derandomized parallel repetition theorem of Dinur and Meir.

Another property that we require of the game $G$ is that it is a $d$-regular, bipartite expander. However, this is not restrictive as an additional property: all known constructions of fortified games satisfy the expansion condition we desire. This includes those in~\cite{M14} and~\cite{BSVV15}, which create fortified games by composing games with good expanders, as well as the random games we construct in Appendix~\ref{sec:random_games_fortified}, which are naturally on expanders. The last property that we use is that the graph underlying $G$ does not have too many parallel edges, which is a property satisfied by virtually all games researchers consider in hardness of approximation. Finally, the reason we limit the blowup of the parallel repetition scheme $\scheme$ to be at most $O(n^{0.49})$ is because we take $\delta$ to be inversely proportional to the blowup. Since $n$ is the number of \emph{edges} in the underlying graph, taking $\delta = O(n^{-0.5})$ already makes $\delta |X|$ and $\delta|Y|$ smaller than 1 for free games, and hence $(\delta, \eps)$-fortification simply does not make sense. Furthermore, we believe the most interesting case of our theorem occurs when the blowup is just $O(n^{o(1)})$, in which case our fortification constraints are relatively mild.

We now give an intuitive overview of our argument. Let $\scheme$ be the parallel repetition scheme from the theorem statement satisfying the uniform marginals property and the small size blowup condition. We will show that for games $G$ satisfying the requisite properties, the presence of a robust embedding lets us obtain a contradiction.

Given such a game $G$ and a supposed randomness-efficient parallel repetition $H \subseteq G^k$ from the scheme $\scheme$, we rule out the existence of a robust embedding of $G$ into $H$. We prove this via contradiction: if there were a robust embedding, then from the embedding we would be able to extract an assignment for $G$ that has success probability significantly greater than $\val(G)$ on some rectangular subgame of $G$. Furthermore, the fact that $H$ is not much larger than $G$ allows us to conclude that this rectangular subgame is not too small. However, this contradicts the fortification property of $G$, which states that all not-too-small rectangular subgames of $G$ have value that's not much larger than $\val(G)$. Thus no such robust embedding can exist.


We give more details about how we extract an assignment from a robust embedding. Recall that a robust embedding of $G$ to $H$ allows us to choose a subset of coordinates $C \subseteq [k]$ and a strategy $\psi^H$ for the repeated game such that, if under $\psi^H$ the probability of success in the coordinates $C$ is greater than some threshold $\gamma$ (which depends on the size of $C$), then there exists an embedding $\emb$ that maps $G$ into the subgame $W_C$ of $H$ where the provers win in $C$. 

We exploit this by letting $C$ be a singleton round $\{s\}$, and letting $\psi^H$ be a trivial strategy where the provers play optimally in round $s$, and all other rounds independently.\footnote{One might find it suspicious that we're deriving a robust embedding from such a trivial strategy, whereas in the proof of Raz's parallel repetition, for example, a robust embedding is derived from ``too-good-to-be-true'' strategies. However, one can see from Appendix~\ref{sec:raz-hol} that the Raz-Holenstein proof of parallel repetition does indeed give us a robust embedding into the subgame $W_{\{s\}}$ from this trivial strategy. Furthermore, as described in Section~\ref{s:embedding} and Appendix~\ref{sec:raz-hol}, a robust embedding is necessary but not sufficient for proving parallel repetition, which is why the robust embedding derived from the trivial strategy won't contradict the fact that the success probability for this strategy is less than $\val(G)^k$. } The probability of succeeding in round $s$ under this strategy is precisely $\val(G)$, which is larger than the threshold $\gamma$. Therefore we obtain an embedding $\emb$ from $G$ into the subgame $W_{\{s\}} \subseteq H$ where the provers win in round $s$. 

Then, we use the fact that $H$ is a randomness-efficient parallel repetition of $G$ and the uniform marginals property to conclude that over the question pairs $(x,y)$ in the base game $G$, the projection of $\emb(x,y)$ (which are question pairs in the repeated game $H$) onto round $s$ must contain a rectangular subgame $G_{M \times N}$ of $G$ that has substantial size. Since $G$ is (approximately) embedded into $W_{\{s\}}$, by definition, $\psi^H$ yields an (almost-)satisfying assignment for $G_{M \times N}$. As stated previously, this would violate the fortification property of $G$.

Though the intuition is rather straightforward, much of the proof involves dealing with the fact that $G$ doesn't \emph{perfectly} embed into $W_{\{s\}}$, but only approximately so, which introduces errors in extracting a nearly satisfying assignment for a rectangular subgame of $G$. We defer the full proof of Theorem~\ref{main_thm} to Appendix~\ref{sec:main-thm-proof}.

\section{Conclusion and Open Problems}

We show limitations on a prevalent proof strategy for derandomized parallel repetition. Specifically, we prove that any parallel repetition scheme that can be applied to a fortified game and yields a ``robust embedding'' cannot achieve almost-linear blowup.
We leave it as an open problem to extend our limitation to schemes with larger blowup. An intriguing related question is whether one can extend our results to provide limitations on derandomized parallel repetition schemes with polynomial blow-up and exponential soundness decay. This would not contradict existing results: Shaltiel's repetition has exponential soundness decay but has nearly-exponential blowup, and Dinur-Meir achieve polynomial blow-up but have polynomial soundness decay. 
As we discussed in the Introduction, the limitation of Feige-Kilian is simple in the case of almost-linear blowup, 
whereas the case of large blowup is considerably more complicated, and it is possible that extending our result to large polynomial blowup will be similarly difficult.

Our analysis takes a robust embedding and extracts from it fairly large rectangles that are nearly satisfied.
The limitation follows from providing fortified games, which do not have such rectangles, as input.
An intriguing possibility given this state of affairs is the following: Is there a technique for parallel repetition that explicitly makes use of lack of fortification in the input? Such a technique would be able to circumvent our limitation if it were applicable to derandomized parallel repetition.

A direction for amplifying two prover games that is not captured by our limitation is amplification via locally decode or reject codes~\cite{MR08}. 
These are efficient encodings with a two query tester/decoder. 
The tester/decoder is able to decode $k$-tuples of symbols from its message, or identify a corruption in the word.
One can encode the answers of the players via such a code, and then ask each prover a different query of the tester/decoder.
Whenever the tester/decoder is correct, one can simulate a randomness-efficient sequential repetition of the base game.
There are constructions of locally decode or reject codes based on low degree polynomials (See \cite{MR08} and many previous works), 
or based on direct product testing (See~\cite{IKW,DM} and many previous works).
The value of the amplified game is typically inherited from the error probability of the local tester/decoder. 
It remains an open problem to find locally decode or reject codes with substantially lower error than existing constructions.

\paragraph{Acknowledgments.} We thank Pritish Kamath and Irit Dinur for helpful discussions. We also thank the anonymous reviewers for their suggestions on the initial manuscript. This paper is based upon work supported by the National Science Foundation under grants number 1218547 and 1452302. The last author is additionally supported by Simons Foundation Fellowship (grant \#360893).

\bibliographystyle{plain}
\bibliography{citationsNov2}

\begin{thebibliography}{10}

\bibitem{babai1990nondeterministic}
L.~Babai, L.~Fortnow, and C.~Lund.
\newblock Nondeterministic exponential time has two-prover interactive
  protocols.
\newblock In {\em Foundations of Computer Science, 1990. Proceedings., 31st
  Annual Symposium on}, pages 16--25. IEEE, 1990.

\bibitem{BGLR}
M.~Bellare, S.~Goldwasser, C.~Lund, and A.~Russell.
\newblock Efficient probabilistically checkable proofs and applications to
  approximations.
\newblock In {\em Proc. 25th ACM Symp. on Theory of Computing}, pages 294--304,
  1993.

\bibitem{ben1988multi}
M.~Ben-Or, S.~Goldwasser, J.~Kilian, and A.~Wigderson.
\newblock Multi-prover interactive proofs: How to remove intractability
  assumptions.
\newblock In {\em Proceedings of the twentieth annual ACM symposium on Theory
  of computing}, pages 113--131. ACM, 1988.

\bibitem{ben1990efficient}
M.~Ben-Or, S.~Goldwasser, J.~Kilian, and A.~Wigderson.
\newblock Efficient identification schemes using two prover interactive proofs.
\newblock In {\em Advances in Cryptology - CRYPTO’89 Proceedings}, pages
  498--506. Springer, 1990.

\bibitem{BSVV15}
A.~Bhangale, R.~Saptharishi, G.~Varma, and R.~Venkat.
\newblock On fortification of projection games.
\newblock {\em arXiv:1504.05556}, 2015.

\bibitem{Bogdanov}
Andrej Bogdanov.
\newblock Gap amplification fails below 1/2.
\newblock {\em Comment on ECCC TR05-046, can be found at http://eccc.
  uni-trier. de/eccc-reports/2005/TR05-046/commt01. pdf}, 2005.

\bibitem{BG}
M.~Braverman and A.~Garg.
\newblock Small value parallel repetition for general games.
\newblock In {\em Proceedings of the Forty-Seventh Annual ACM on Symposium on
  Theory of Computing}, pages 335--340. ACM, 2015.

\bibitem{cleve2004consequences}
R.~Cleve, P.~H{\o}yer, B.~Toner, and J.~Watrous.
\newblock Consequences and limits of nonlocal strategies.
\newblock In {\em Computational Complexity, 2004. Proceedings. 19th IEEE Annual
  Conference on}, pages 236--249. IEEE, 2004.

\bibitem{Dinur}
I.~Dinur.
\newblock The \textsc{PCP} theorem by gap amplification.
\newblock {\em Journal of the ACM}, 54(3):12, 2007.

\bibitem{DM}
I.~Dinur and O.~Meir.
\newblock Derandomized parallel repetition via structured \textsc{PCP}s.
\newblock {\em Computational Complexity}, 20(2):207--327, 2011.

\bibitem{DinurSteurer}
I.~Dinur and D.~Steurer.
\newblock Analytical approach to parallel repetition.
\newblock In {\em Proceedings of the 46th Annual ACM Symposium on Theory of
  Computing}, pages 624--633. ACM, 2014.

\bibitem{FK}
U.~Feige and J.~Kilian.
\newblock Impossibility results for recycling random bits in two-prover proof
  systems.
\newblock In {\em Proc. 27th ACM Symp. on Theory of Computing}, pages 457--468,
  1995.

\bibitem{Has97}
J.~H{\aa}stad.
\newblock Some optimal inapproximability results.
\newblock {\em Journal of the ACM}, 48(4):798--859, 2001.

\bibitem{Hol07}
T.~Holenstein.
\newblock Parallel repetition: Simplification and the no-signaling case.
\newblock {\em Theory of Computing}, 5(1):141--172, 2009.

\bibitem{IKW}
R.~Impagliazzo, V.~Kabanets, and A.~Wigderson.
\newblock New direct-product testers and 2-query \textsc{PCP}s.
\newblock {\em SIAM Journal on Computing}, 41(6):1722--1768, 2012.

\bibitem{khot2002power}
S.~Khot.
\newblock On the power of unique 2-prover 1-round games.
\newblock In {\em Proceedings of the thiry-fourth annual ACM symposium on
  Theory of computing}, pages 767--775. ACM, 2002.

\bibitem{M14}
D.~Moshkovitz.
\newblock Parallel repetition from fortification.
\newblock In {\em Proc. 55th IEEE Symp. on Foundations of Computer Science},
  pages 414--423, 2014.

\bibitem{M}
D.~Moshkovitz.
\newblock The projection games conjecture and the \textsc{NP}-hardness of $\ln
  n$-approximating set-cover.
\newblock {\em Theory of Computing}, 11(7):221--235, 2015.

\bibitem{MR08}
D.~Moshkovitz and R.~Raz.
\newblock Two query \textsc{PCP} with sub-constant error.
\newblock {\em Journal of the ACM}, 57(5), 2010.

\bibitem{Rao09}
A.~Rao.
\newblock Parallel repetition in projection games and a concentration bound.
\newblock {\em SIAM Journal on Computing}, 40(6):1871--1891, 2011.

\bibitem{Raz}
R.~Raz.
\newblock A parallel repetition theorem.
\newblock {\em SIAM Journal on Computing}, 27:763--803, 1998.

\bibitem{reichardt2013classical}
B.W. Reichardt, F.~Unger, and U.~Vazirani.
\newblock Classical command of quantum systems.
\newblock {\em Nature}, 496(7446):456--460, 2013.

\bibitem{Sh}
R.~Shaltiel.
\newblock Derandomized parallel repetition theorems for free games.
\newblock {\em Computational Complexity}, 22(3):565--594, 2013.

\end{thebibliography}

\appendix

\section{Proof of Theorem \ref{main_thm} (Main Theorem)}
\label{sec:main-thm-proof}

In this section, we prove the main theorem of this paper, which we restate here for clarity:
\begin{theorem}[Main Theorem]
Let $\Sigma$ be a finite alphabet. Let $\scheme = \{G \to H \subseteq G^k \}$ be a parallel repetition scheme that satisfies the uniform marginals property and has size blowup $\Phi_{\scheme, \Sigma}(n) \leq O(n^{0.49})$. Then for all $n > 0$, $\eps \in (0, 1/23)$, $\delta \leq (16 \Phi_{\scheme,\Sigma}(n) \log^2 (\Phi_{\scheme,\Sigma}(n) ))^{-1}$, an integer $d$, and for all games $G$ satisfying:
	\begin{enumerate}
		\item The graph $\mathcal{G} = (X \times Y,E)$ underlying $G$ is $d$-regular, and has at most $\eps |E|$ parallel edges.
		\item For all $S \subseteq X,T \subseteq Y$ with $|S| \geq \delta |X|,|T| \geq \delta |Y|$, we have
		$$
			\left | \frac{| E \cap (S \times T) |}{|S| |T|} - \frac{d}{|Y|} \right | \leq  \eps \frac{d}{|Y|}.	
		$$
		\item $\mathrm{val}(G) \leq 1 - 20\eps$.
		\item $G$ is $(\delta,\eps)$-fortified.
	\end{enumerate}
there does not exist a  $(\gamma,\eps)$-robust embedding of $G$ into a coordinate of $H$ for all $\gamma \leq \val(G)$, $\eps < (1 - \gamma)/23$, where $H = \scheme(G)$.
\end{theorem}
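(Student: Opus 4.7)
The plan is to argue by contradiction. Suppose a $(\gamma,\eps)$-robust embedding of $G$ into a coordinate of $H$ exists for some $\gamma \leq \val(G)$ and $\eps < (1-\gamma)/23$. I will exhibit a rectangular subgame $G_{M \times N}$ with $|M| \geq \delta|X|, |N| \geq \delta|Y|$ and a strategy on it whose value exceeds $\val(G) + \eps$, contradicting the $(\delta,\eps)$-fortification hypothesis on $G$.

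The embedding will be triggered by the ``trivial'' strategy $\psi^H$ for $H$ that plays an optimal strategy $\psi^G$ for $G$ in a fixed coordinate $s$ --- so $\psi^H_X(\oline{x})_s = \psi^G_X(\oline{x}_s)$, and analogously for $Y$ --- and any fixed default answer in the remaining coordinates. The uniform marginals property gives $\Pr_{\oline{e} \in E_H}[\oline{e} \in W_{\{s\}}] = \val(G) \geq \gamma$, so the robustness hypothesis produces an $i \in [k]\setminus\{s\}$ and an embedding $\emb = (f,g)$ with $f(x)_i = x$, $g(y)_i = y$, and $\Pr_{e \in E}[\emb(e) \in W_{\{s\}}] \geq 1 - \eps$; the coordinate-embedding property forces $f, g$, and hence $\emb$, to be injective. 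Write $f_s(x) = f(x)_s$, $g_s(y) = g(y)_s$, $\phi(x,y) = (f_s(x), g_s(y))$, $M = f_s(X)$, $N = g_s(Y)$, and $E_{\mathrm{good}} = \{e \in E : \emb(e) \in W_{\{s\}}\}$, so $|E_{\mathrm{good}}| \geq (1-\eps)|E|$. The key observation is that for every $e \in E_{\mathrm{good}}$, the image $\phi(e)$ is an edge of $G$ lying in $M \times N$ at which $\psi^G$ wins: indeed, $\psi^H$ winning round $s$ at $\emb(e)$ amounts to $\psi^G$ winning at $\phi(e) = \emb(e)_s$.

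Two parallel double-counting arguments, both powered by uniform marginals together with injectivity of $\emb$, supply the sizing estimates. Writing $\Phi = \Phi_{\scheme,\Sigma}(n)$ for brevity, the global one yields $|M| \geq (1-\eps)|X|/\Phi$ and analogously for $|N|$: the inclusion $\emb(E) \cap E_H \subseteq \{\oline{e} \in E_H : \oline{e}_s \in M \times Y\}$ has a left-hand side of size at least $(1-\eps)|E|$ and, by uniform marginals, a right-hand side of size exactly $\Phi \cdot d|M|$. The hypothesis $\delta \leq (16\Phi\log^2\Phi)^{-1}$ then yields $|M|, |N| \geq \delta|X|, \delta|Y|$ with ample room to spare, so fortification applies to any rectangular subgame sitting inside $M \times N$. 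The local version of this counting, applied to each fiber $\{\oline{e} \in E_H : \oline{e}_s = (u,v)\}$ for $(u,v) \in E$, gives the bound $|E_{\mathrm{good}} \cap (X_u \times Y_v)| \leq \Phi$ (where $X_u = f_s^{-1}(u)$, $Y_v = g_s^{-1}(v)$), and hence $|\phi(E_{\mathrm{good}})| \geq (1-\eps)|E|/\Phi$.

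The final step, which I expect to be the hardest, is to find a sub-rectangle $M_* \times N_* \subseteq M \times N$ on which $\psi^G$ --- restricted to $M_* \times N_*$ --- has value strictly more than $\val(G) + \eps$. By the key observation above, this value is at least $|\phi(E_{\mathrm{good}}) \cap (M_* \times N_*)| / |G_{M_* \times N_*}|$. I plan to choose $M_*, N_*$ by first trimming the vertices with atypically large preimages under $f_s, g_s$ via a Markov-type step (losing only a constant factor in $|M|, |N|$), then thinning further so that $|M_*||N_*|$ is of order $|X||Y|/\Phi$, which by the choice of $\delta$ is still much larger than $\delta^2|X||Y|$. On this sub-rectangle, the pseudorandomness condition~(2) pins $|G_{M_* \times N_*}|$ down to a $(1\pm\eps)$ factor, and combined with the fiber and global size bounds it forces $\phi(E_{\mathrm{good}})$ to cover a $1 - O(\eps)$ fraction of $G_{M_* \times N_*}$. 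Since $\val(G) \leq 1 - 20\eps$, the resulting value is strictly more than $\val(G) + \eps$, contradicting fortification. The main technical obstacle is carrying out this balancing carefully, and the $\log^2 \Phi$ slack in the hypothesis on $\delta$ is precisely what should absorb the Markov, pseudorandomness, and union-bound losses.
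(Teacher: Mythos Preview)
Your overall architecture matches the paper's: assume a robust embedding, trigger it with the trivial strategy at $C=\{s\}$, use coordinate-embedding to get injectivity of $f,g$, and derive a large rectangular subgame on which $\psi^G$ wins too often, contradicting fortification. Your global bound $|M|\geq (1-\eps)|X|/z$ and the fiber bound $|E_{\mathrm{good}}\cap(X_u\times Y_v)|\leq z$ via uniform marginals are both correct and are in fact cleaner than how the paper phrases things.

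The gap is entirely in the ``final step,'' and it is a real one. Knowing $|\phi(E_{\mathrm{good}})|\geq(1-\eps)|E|/z$ does \emph{not} by itself yield a sub-rectangle $M_*\times N_*$ on which $\phi(E_{\mathrm{good}})$ covers a $1-O(\eps)$ fraction of $E\cap(M_*\times N_*)$: the set $\phi(E_{\mathrm{good}})$ sits inside $M\times N$, and $|E\cap(M\times N)|$ can be far larger than $|E|/z$ (nothing prevents $|M|$ or $|N|$ from being close to full size). So aiming for ``$|M_*||N_*|$ of order $|X||Y|/\Phi$'' is the wrong target, and thinning on the image side does not help.

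What is actually needed --- and what the paper does --- is to pass to the \emph{preimage} side: find $X_*\subseteq X$, $Y_*\subseteq Y$ with (i) $f_s|_{X_*}$ and $g_s|_{Y_*}$ injective, (ii) $|E_{\mathrm{good}}\cap(X_*\times Y_*)|\geq (1-O(\eps))|E\cap(X_*\times Y_*)|$, and (iii) $|X_*|,|Y_*|\geq \delta|X|,\delta|Y|$. Then, setting $M_*=f_s(X_*)$, $N_*=g_s(Y_*)$, expansion applied once to $E\cap(X_*\times Y_*)$ (via the isomorphism $\emb$ between $G$ and its image) and once to $E\cap(M_*\times N_*)$ pins both down to $(1\pm\eps)d|M_*||N_*|/|Y|$, and injectivity transports the satisfied edges from one to the other. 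The delicate point is achieving (i) and (ii) \emph{simultaneously}: a single Markov step removing heavy fibers gets you (i) but destroys control over (ii), because when you then select one preimage per image point the choice interacts badly with where $E_{\mathrm{good}}$ sits. The paper handles this with a two-stage refinement after the Markov step: first bucket vertices by fiber-weight into $O(\log z)$ dyadic classes and restrict to one class on each side (this is where the $\log^2 z$ in the hypothesis on $\delta$ is spent), then within a class of near-uniform weight partition the preimages into $\leq 2z$ labeled transversals and average to find one transversal pair $X_*,Y_*$ on which the $E_{\mathrm{good}}$-density survives. Your sketch anticipates the Markov step but not this bucket-then-partition mechanism, and without it the argument does not close.
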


Let $\scheme$, $n$, $\Sigma$, $\eps$, $\delta$, and $d$ be as in the theorem statement. In the proof we let $d_X$ and $d_Y$ denote the left-degree and right-degree of $\mathcal{G}$, and so $d_X = d_Y = d$, since $\mathcal{G}$ is $d$-regular. 

Let $H = (X^k,Y^k,E_H,\pi^k,\Sigma^k)$ be the $k$-repetition of $G$ under the scheme $\scheme$. Define $z := \frac{|H|}{|G|}$. Notice that $z \leq \Phi_{\scheme, \Sigma}(n)$, since $\Phi_{\scheme, \Sigma}(n)$ is effectively a maximum of $z$ taken over all games $G$ with $|G| = n$. Fix $\gamma \leq \val(G) \leq 1 - 20\eps$. Fix a round $s \in [k]$. Let $\psi^H = (\psi^H_X,\psi^H_Y)$ be a strategy for the provers in the repeated game under which the event of winning in round $s$ occurs with probability at least $\gamma$. Note that such a strategy always exists: the two provers can play the optimal strategy for $G$ in round $s$, and by the uniform marginals property of Theorem \ref{main_thm}, $\Pr[W_{\{s\}}] = \val(G) \geq \gamma$. 

Suppose for contradiction that there exists an $(\gamma,\eps)$-robust embedding into a coordinate of $H$. Let $C = \{s\}$ for some round $s \in [k]$. Then by definition of a robust embedding, we obtain an embedding map $\emb(x,y) = (f_X(x),f_Y(y))$ for maps $f_X: X \to X_H$ and $f_Y: Y \to Y_H$. Denote by $W_{\{s\}}$ the set of edges that win in round $s$ under the strategy $\psi^H$.


Define $W$ to be the set of edges that win in round $s$ and are mapped into by the embedding map, namely, $W := \{ \emb(x,y): (x,y) \in E \} \cap W_{\{s\}}$, and let $H_W$ be the subgame of $H$ induced by the edge set $W$.

Combining the fact that $\emb$ is a robust embedding into $\{s\}$ (Definition \ref{def:robust_reduction}, Property 2) with the definition of $W$, we know that 
\begin{equation}
\label{winning_edges}
 \Pr_{(x,y) \in E} [\emb(x,y) \in W] \geq 1 - \eps
 \end{equation}
 While all but an $\eps$-fraction of edges $(x,y) \in E$ map into $W$ under $\emb$, it will be convenient for us to define a set $\widehat{W}$ which \emph{all} the edges $(x,y)$ map into. Hence, we define the set of repeated game vertex pairs $\widehat{W}$ to be
 \[ \widehat{W} = \left\{ \emb(x,y): (x,y) \in E \right\} \subseteq X^k \times Y^k \]
By equation \eqref{winning_edges}, we observe that
 \[ |\widehat{W} \backslash W| \leq \eps |\widehat{W}| = \eps |E| \]
and hence, $W$ and $\widehat{W}$ are $\eps$-close. Note that some of the edges in $\widehat{W}$ may not exist in $E_H$. However, it will be useful to think of $\widehat{W}$ as a set of edges that induces a graph on repeated game vertices $\mathcal{H}_{\widehat{W}} = (im(f_X), im(f_Y), \widehat{W})$. While we use the notation $\mathcal{H}_{\widehat{W}}$ to indicate that it is a graph on repeated game vertices, it is again important to note that this graph is not a subgraph of $\mathcal{H}$. 

For the remainder of the proof, we will assume that $W$ and $\widehat{W}$ have no parallel edges, and that $\mathcal{H}_{\widehat{W}}$ is isomorphic to $G$. We note that this is not strictly true if $G$ has parallel edges - however, since we know that $G$ has at most $\eps |E|$ parallel edges so $\mathcal{H}_{\widehat{W}}$ is $\eps$-close to a graph that is isomorphic to $G$ even after taking out parallel edges. Hence, the same argument goes through by simply making $\eps$ slightly smaller. For a more detailed discussion of how we handle a small number of parallel edges, we refer the reader to the Remark at the end of this section.

We argue that very few vertices in $x \in X$ and $y \in Y$ are heavily represented in round $s$ of the sets of repeated game vertices $im(f_X)$ and $im(f_Y)$. Informally, we will use the uniform marginals property, which lets us conclude that there are many distinct base game edges $(x,y)$ in round $s$ of the edge set $W$, to conclude that there must be many distinct base game vertices in round $s$ of endpoints of $W$. To formalize the notion of a vertex $x$ being heavily represented in round $s$ of $im(f_X)$, define the weight of $x \in X$ be $w_x = |\{\oline{x} \in im(f_X): \oline{x}_s = x \}|$. Similarly, let the weight of $y \in Y$ be $w_y = |\{\oline{y} \in f_Y: \oline{y}_s = y \}|$. We argue that under the uniform marginals property of Theorem \ref{main_thm}, there cannot be many vertices with weight more than $2z = 2\frac{|H|}{|G|}$. 


\begin{proposition}
\label{marginal_count_edges}
Take $G$ and $H$ to be the base game and repeated game from Theorem \ref{main_thm}, and let $E$ denote the edge set of the base game $G$. For any fixed edge $(x,y) \in E$ and round $j \in [k]$, we have that  
\[
 \left| \left\{ (\oline{x}, \oline{y}) \in E_H : (\oline{x}_j, \oline{y}_j) = (x,y) \right\} \right| = z
\]
\end{proposition}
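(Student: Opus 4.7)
The plan is to unpack the uniform marginals property of the scheme $\scheme$ directly, since the proposition is essentially a restatement of that property in counting (rather than probabilistic) form.

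First, I would recall from Section 2 that since $\scheme$ satisfies the uniform marginals property, for any base game $G = (X,Y,E,\pi,\Sigma)$, any round $j \in [k]$, and any fixed edge $(x,y) \in E$, we have
\[
\Pr_{(\oline{x},\oline{y}) \in E_H}\bigl[(\oline{x}_j, \oline{y}_j) = (x,y)\bigr] = \frac{1}{|E|}.
\]
Here the probability is with respect to the uniform distribution on $E_H$, so the left-hand side equals $|\{(\oline{x},\oline{y}) \in E_H : (\oline{x}_j,\oline{y}_j) = (x,y)\}| / |E_H|$.

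Next, I would rearrange: the count in question equals $|E_H|/|E|$. Since $|E_H| = |H|$ and $|E| = |G|$ by the convention that game size equals the number of edges, this ratio equals $|H|/|G|$, which is the quantity defined as $z$ earlier in the proof. This gives exactly the claimed identity.

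There is no real obstacle here; the proposition is essentially a translation of the defining equation of the uniform marginals property. The only thing to be careful about is that we are interpreting the probability over $E_H$ as uniform, and that we keep the equalities $|E_H|=|H|$ and $|E|=|G|$ straight. Hence the proof is a single-line calculation and the content of the proposition is that the number of ``copies'' of any edge $(x,y)$ across coordinate $j$ of $E_H$ is exactly the multiplicative blowup factor $z$, which is the form in which the uniform marginals property will be used in subsequent counting arguments.
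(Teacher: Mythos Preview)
Your proposal is correct and matches the paper's own proof essentially line for line: both simply multiply the uniform-marginals probability $1/|E|$ by $|E_H|$ to get $|E_H|/|E| = |H|/|G| = z$. There is nothing to add.
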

\begin{proof}
By applying the uniform marginals property of Theorem \ref{main_thm}, we observe that 
\begin{align*}
 \left| \left\{ (\oline{x}, \oline{y}) \in E_H : (\oline{x}_j, \oline{y}_j) = (x,y) \right\} \right| &= \Pr_{(\oline{x},\oline{y}) \in E_H} \left[ (\oline{x}_j, \oline{y}_j) = (x,y) \right] \cdot |E_H| \\
 &= \frac{|E_H|}{|E|}  = z
\end{align*}
\end{proof}

Below, say that a repeated game vertex $v \in im(f_X) \cup im(f_Y)$ is BAD if $v_s$ has weight more than $2z$.

\begin{lemma}
\label{few_bad_edges}
There are at most $2 \eps |E|$ repeated game edges in $W$ incident to BAD vertices.
\end{lemma}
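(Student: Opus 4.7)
The plan is to control separately, for each side of the bipartite graph, the $W$-edges whose round-$s$ coordinate on that side is a BAD base-game vertex, and then combine. Define $X_{\mathrm{BAD}} := \{x^* \in X : w_{x^*} > 2z\}$ and $Y_{\mathrm{BAD}}$ analogously; any BAD-incident edge $(\oline{x},\oline{y}) \in W$ satisfies $\oline{x}_s \in X_{\mathrm{BAD}}$ or $\oline{y}_s \in Y_{\mathrm{BAD}}$, so it suffices to bound each of these two sub-counts.

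The key observation is to compare, for each $x^* \in X$, two counts: the number of $\widehat{W}$-edges whose round-$s$ $X$-coordinate equals $x^*$, and the number of $E_H$-edges with the same property. Under the standing assumption that $\mathcal{H}_{\widehat{W}}$ is isomorphic to $\mathcal{G}$, the map $f_X$ is injective, so exactly $w_{x^*}$ base-game vertices $x \in X$ satisfy $(f_X(x))_s = x^*$, and each contributes $d$ edges to $\widehat{W}$ via its $d$ neighbors in $\mathcal{G}$. Hence there are $d\cdot w_{x^*}$ edges of $\widehat{W}$ with $\oline{x}_s = x^*$. On the other hand, applying Proposition~\ref{marginal_count_edges} to each of the $d$ neighbors $y^*$ of $x^*$ in $\mathcal{G}$ and summing gives exactly $dz$ edges of $E_H$ with $\oline{x}_s = x^*$.

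For $x^* \in X_{\mathrm{BAD}}$ we have $w_{x^*} > 2z$, and therefore $d w_{x^*} - dz > \tfrac{1}{2} d w_{x^*}$: strictly more than half of the $\widehat{W}$-edges with $\oline{x}_s = x^*$ must lie in $\widehat{W} \setminus E_H \subseteq \widehat{W} \setminus W$. Summing over $x^* \in X_{\mathrm{BAD}}$ and invoking the robustness bound $|\widehat{W} \setminus W| \le \eps|E|$ from \eqref{winning_edges} yields
\[
\tfrac{1}{2} \sum_{x^* \in X_{\mathrm{BAD}}} d\, w_{x^*} \;\le\; |\widehat{W} \setminus E_H| \;\le\; \eps|E|,
\]
which bounds the number of $W$-edges with $\oline{x}_s \in X_{\mathrm{BAD}}$ by at most $\sum_{x^* \in X_{\mathrm{BAD}}} d\, w_{x^*} \le 2\eps|E|$. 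The symmetric argument on the $Y$-side and a careful combination of the two sub-counts---using that $\widehat{W}_X \setminus E_H$ and $\widehat{W}_Y \setminus E_H$ both sit inside the single pool $\widehat{W} \setminus E_H$ of size at most $\eps|E|$, so the slack need not be double-counted---then produces the claimed $2\eps|E|$ bound. The main technical subtlety is this final bookkeeping, since naively union-bounding the two sides would cost an additional factor of two; everything else is a clean two-line combinatorial identity driven by the gap between $dw_{x^*}$ and $dz$.
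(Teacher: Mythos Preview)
Your approach is essentially the paper's: compare, for each $x^*\in X$, the $\widehat{W}$-edges and the $E_H$-edges with round-$s$ coordinate $x^*$, and use the $\eps|E|$ bound on $|\widehat{W}\setminus W|$. But the final bookkeeping step, which you yourself flag as the main subtlety, does not go through as written.

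You bound the $W$-edges with $\oline{x}_s\in X_{\mathrm{BAD}}$ by $A_X:=\sum_{x^*\in X_{\mathrm{BAD}}} d\,w_{x^*}\le 2\eps|E|$, and similarly $A_Y\le 2\eps|E|$ on the $Y$-side. Your ``careful combination'' then asserts that because the bad edges $Q_X:=\{e\in\widehat{W}\setminus E_H:\oline{x}_s\in X_{\mathrm{BAD}}\}$ and $Q_Y$ both sit inside the single pool $\widehat{W}\setminus E_H$ of size $\le\eps|E|$, one can avoid the factor of two from the union bound. But nothing prevents $Q_X=Q_Y$: an edge in $\widehat{W}\setminus E_H$ can have \emph{both} $\oline{x}_s\in X_{\mathrm{BAD}}$ and $\oline{y}_s\in Y_{\mathrm{BAD}}$, in which case the same pool is charged fully by each side, and your inequalities only give $A_X+A_Y<2(|Q_X|+|Q_Y|)\le 4\eps|E|$.

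The fix is simple and is exactly what the paper does: do not bound $B_X$ (the $W$-edges with $\oline{x}_s\in X_{\mathrm{BAD}}$) by $A_X$, but rather by the $E_H$-count, using $W\subseteq E_H$ and Proposition~\ref{marginal_count_edges}:
\[
B_X \;\le\; |\{e\in E_H:\oline{x}_s\in X_{\mathrm{BAD}}\}| \;=\; |X_{\mathrm{BAD}}|\cdot dz.
\]
Now use the sharper form of your own gap inequality: for $x^*\in X_{\mathrm{BAD}}$ one has $d\,w_{x^*}-dz>dz$ (not merely $>\tfrac12 d\,w_{x^*}$), so summing gives $|X_{\mathrm{BAD}}|\cdot dz<|Q_X|\le\eps|E|$. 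Hence $B_X<\eps|E|$, and symmetrically $B_Y<\eps|E|$; the naive union bound now gives the claimed $2\eps|E|$ with no further cleverness needed.
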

For a fixed $x \in X$, define the set of vertices $PREIMG_x$ to be the set of edges in the repeated game that have $x$ in the $s^{th}$ coordinate, $PREIMG_x := \{ \oline{x} \in im(f_X): \oline{x}_s = x\}$. Notice that BAD vertices are exactly repeated game vertices that belong to $PREIMG_{x}$ for some $x \in X$ such that $w_x > 2z$. In what follows, we will let $d_X = d_Y = d$.

\begin{proof}[Proof (of Lemma \ref{few_bad_edges})]
For each $x' \in X$ with weight $w_{x'} > 2z$, we will argue that there are many edges in $\widehat{W} \backslash W$ incident to $PREIMG_{x'}$. Then, by noticing that $\widehat{W}$ and $W$ are $\eps$-close, we will be able to upper bound the number of vertices in $X$ with weight more than $2z$. Applying the uniform marginals property of $H$ from Theorem \ref{main_thm}, we will get an upper bound on the number of edges in $W$ incident to BAD vertices. 

Fix $x' \in X$ such that $w_{x'} > 2z$. We argue that there are at least $2z \cdot d_X$ edges in $\widehat{W}$ that are incident to $PREIMG_{x'}$. Recall that $\mathcal{H}_{\widehat{W}}$ is isomorphic to $\mathcal{G}$, the underlying graph of $G$. Specifically, this means that the degree of every member of $PREIMG_{x'}$ in $\mathcal{H}_{\widehat{W}}$ is exactly $d_X$, so in total there are $w_{x'} \cdot d_X \geq 2z \cdot d_X$ edges in $\widehat{W}$ incident to elements of $PREIMG_{x'}$. 

Recall from Proposition \ref{marginal_count_edges} that the uniform marginals property of the scheme $\scheme$ tells us that, for any fixed edge $(x,y) \in E$, the number of edges $(\oline{x}, \oline{y}) \in E_H$ such that $(\oline{x}_s, \oline{y}_s) = (x,y)$ is exactly $z$. Since $W \subseteq E_H$, we conclude that for any fixed $(x,y) \in E$ we have that
\[|\{ (\oline{x}, \oline{y}) \in W : (\oline{x}_s, \oline{y}_s) = (x,y)\}| \leq z \]
By fixing $x$ and summing over all $y$ such that $(x,y) \in E$, of which there are exactly $d_X$, we can see that 
\[|\{ (\oline{x}, \oline{y}) \in W : \oline{x}_s = x \}| \leq z \cdot d_X \]
for any fixed $x \in X$.
In other words, there can be at most $z \cdot d_X$ edges in $W$ incident to vertices in $PREIMG_{x}$, for any $x \in X$.

 Combining our lower bound of $2z \cdot d_X$ for the number of edges in $\widehat{W}$ incident to $PREIMG_{x'}$ and our upper bound of $z \cdot d_X$ for the number of edges in $W$ incident to $PREIMG_{x'}$, we see there are at least $2z \cdot d_X - z \cdot d_X = z \cdot d_X$ edges in $\widehat{W} \backslash W$ that touch $PREIMG_{x'}$, and that this is true for all $x'$ such that $w_{x'} > 2z$.

Noticing that there are not many edges in $\widehat{W} \backslash W$, we can upper bound the number of $X$ vertices with weight more than $2z$. For each vertex $x \in X$, let the variable $i_x$ denote the number of edges incident to the vertex set $PREIMG_x$ that are in $\widehat{W} \backslash W$. Since $|\widehat{W} \backslash W| \leq \eps |E|$, we get:
\begin{align*}
\eps |E| &\geq \sum\limits_{x \in X: w_x > 2z} i_x \\
&\geq |\{ x \in X: w_x > 2z\}| \cdot z \cdot d_X
\end{align*}
So we get that the number of base game vertices $x \in X$ with weight more than $2z$ is at most $\frac{\eps |E|}{z d_X}$. Reapplying the observation that there can be at most $z \cdot d_X$ edges incident to $PREIMG_x$ for any base game vertex $x$, we see that there can be at most $\eps |E|$ edges in $W$ incident to BAD vertices that live in $im(f_X)$. 

Repeating the proof for vertices in $Y$ shows there are at most $\frac{\eps |E|}{z d_Y}$ vertices in $Y$ with weight more than $2z$, and at most $\eps |E|$ edges in $W$ incident to BAD vertices that live in $im(f_Y)$. Union bounding over vertices in $im(f_X)$ and $im(f_Y)$ yields the result.
\end{proof} 
Lemma \ref{few_bad_edges} lets us remove all the bad vertices from $\mathcal{H}_{W}$, along with all the edges incident to them, and still have a graph with at least $(1 - 3 \eps)|E|$ edges. Call the resulting graph $\mathcal{H}'_{W} = ((X'_W, Y'_W), W')$. We remove the same vertices and the incident edges from $\mathcal{H}_{\widehat{W}}$ to get the graph $\mathcal{H}_{\widehat{W}'} = ((X'_W, Y'_W), \widehat{W}')$. Note that we still have $W' \subseteq \widehat{W}'$ and $|\widehat{W}' \backslash W'| \leq \eps |E|$, and since we did not remove many edges thanks to Lemma \ref{few_bad_edges}, we know that $|\widehat{W}'| \geq |W'| \geq (1 - 3\eps)|E|$. 

We would like to find a subset of vertices $S \subseteq X'_W$ such that every element of $\{x \in X: \exists \oline{x} \in S \text{ s.t. } \oline{x}_s = x\}$ have similar weights, and find an analogous subset $T \subseteq Y'_W$.
\begin{lemma}
\label{same_weights}
There are subsets $S \subseteq X'_W$ and $T \subseteq Y'_W$ such that:
\begin{enumerate}
\item $|\widehat{W}' \cap (S \times T)| \geq \frac{(1 - 6 \eps)|E|}{4\log^2(z)}$
\item $W' \cap (S \times T)$ is $2 \eps$-close to $\widehat{W}' \cap (S \times T)$.
\item There are integers $w_x^*, w_y^* \in \mathbb{Z}^+$ such that for any $x \in X \text{ s.t. } \oline{x}_s = x$ for some $\oline{x} \in S$ and $y \in Y \text{ s.t. } \oline{y}_s = y$ for some $\oline{y} \in T$, we have that $w_{x}^* \leq w_x \leq 2 w_x^*$ and $w_y^* \leq w_y \leq 2 w_y^*$.
\end{enumerate}
\end{lemma}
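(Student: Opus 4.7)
The plan is a dyadic weight-bucketing argument. Since the BAD vertices have already been removed in passing to $\mathcal{H}'_W$, every surviving $\oline{x} \in X'_W$ satisfies $1 \le w_{\oline{x}_s} \le 2z$, and likewise every surviving $\oline{y} \in Y'_W$ satisfies $1 \le w_{\oline{y}_s} \le 2z$. I would partition
\[
\mathcal{X}_i := \bigl\{\oline{x} \in X'_W : 2^i \le w_{\oline{x}_s} < 2^{i+1}\bigr\}, \qquad i = 0, 1, \ldots, \lceil \log_2(2z) \rceil,
\]
and define $\mathcal{Y}_j$ analogously. This yields $N \le \log_2(z) + 2$ classes on each side, hence at most $N^2 \le 4\log^2 z$ rectangles tiling $X'_W \times Y'_W$ (once $z$ is at least a small constant, which is the only interesting regime). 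Any single rectangle $\mathcal{X}_i \times \mathcal{Y}_j$ automatically inherits property 3 with $w_x^* = 2^i$ and $w_y^* = 2^j$, so the task reduces to locating one pair $(i^*,j^*)$ for which properties 1 and 2 both hold and then setting $S := \mathcal{X}_{i^*}$, $T := \mathcal{Y}_{j^*}$.

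To find such a pair, I would set $A_{ij} := |\widehat{W}' \cap (\mathcal{X}_i \times \mathcal{Y}_j)|$ and $B_{ij} := |(\widehat{W}' \setminus W') \cap (\mathcal{X}_i \times \mathcal{Y}_j)|$ and exploit $\sum_{ij} A_{ij} \ge (1-3\eps)|E|$ (from Lemma \ref{few_bad_edges}) together with $\sum_{ij} B_{ij} \le \eps|E|$ (since $W'$ and $\widehat{W}'$ are $\eps$-close). Because $W' \cap (\mathcal{X}_i \times \mathcal{Y}_j) \subseteq \widehat{W}' \cap (\mathcal{X}_i \times \mathcal{Y}_j)$, property 2 reduces to the ratio bound $B_{ij} \le 2\eps A_{ij}$. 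I would then perform a Markov-style purge: discard every pair with $B_{ij} > 2\eps A_{ij}$; their combined $A$-mass is at most $(2\eps)^{-1} \sum_{ij} B_{ij} \le |E|/2$, so the surviving "pure" pairs carry at least $(1/2 - 3\eps)|E|$ of the $\widehat{W}'$ edges. An averaging step over the $\le 4\log^2 z$ remaining rectangles then produces a pair $(i^*,j^*)$ whose $A_{i^*j^*}$ hits a lower bound of the required order, and by construction its $B/A$ ratio is at most $2\eps$.

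The main obstacle is the tension between "$A_{ij}$ large" and "$B_{ij}/A_{ij}$ small": a priori the bucket pair with the heaviest $\widehat{W}'$-mass could be the one most contaminated by $\widehat{W}' \setminus W'$, which is precisely why one must purge the contaminated pairs first (affordable only because $\eps$ is small relative to a constant) and average afterwards. The slack between the $(1/2-3\eps)$ that my rough purge produces and the stated $(1-6\eps)$ constant in property 1 can be closed by a tighter choice of the purge threshold or by first pruning the "small" rectangles before the averaging step, which I expect to be routine book-keeping. The minor issue that $\mathcal{H}_{\widehat{W}'}$ is only approximately isomorphic to $\mathcal{G}$ when $G$ has parallel edges is handled as in the remark preceding the lemma, at the cost of an extra $O(\eps)$ in every count.
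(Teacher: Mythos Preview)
Your proposal is correct and essentially identical to the paper's proof: the same dyadic bucketing, the same Markov purge of pairs with $B_{ij} > 2\eps A_{ij}$, and the same averaging over the surviving buckets. The ``slack'' you worry about is illusory---the paper also obtains only $(1/2-3\eps)|E|$ mass on the good pairs, and the stated bound $\frac{(1-6\eps)|E|}{4\log^2 z}$ is exactly $\frac{(1/2-3\eps)|E|}{2\log^2 z}$; the only tweak you need is the slightly sharper bucket count $(\lceil\log(2z)\rceil)^2 \le 2\log^2 z$ rather than your $4\log^2 z$.
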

\begin{proof}
For each pair of positive integers $(i,j)$ such that $0 \leq i,j \leq \lceil \log(2z) \rceil -1$, let $S_i = \{ \oline{x} : \oline{x}_s = x \text{ for } x \in X \text{ s.t. } 2^i \leq w_x \leq 2^{i+1}\}$ and $T_j = \{ \oline{y}:\oline{y}_s = y, y \in Y, 2^j \leq w_y \leq 2^{j+1}\}$. Note that the sets $\{ \widehat{W}' \cap (S_i \times T_j): {1 \leq i, j \leq \lceil \log(2z) \rceil -1}\}$ form a partition of the edges in $\widehat{W}'$, since we removed all BAD vertices and incident edges earlier. We will call a pair $(i,j)$ bad if $W' \cap (S_i \times T_j)$ is more than $2 \eps$-far from the edge set $\widehat{W}' \cap (S_i \times T_j)$, and good otherwise. 

Since $\widehat{W}' \backslash W'$ has size at most $\eps |E|$, we can upper bound the size of the set 
\[\bigcup_{i,j: (i,j) \text{ is bad}}  \widehat{W}' \cap (S_i \times T_j) \] 
as follows:
\begin{align*}
\eps |E| &\geq |\widehat{W}' \backslash W'| \\
&= \sum\limits_{0 \leq i,j \leq \lceil \log(2z) \rceil -1} \left| \left(\widehat{W}' \cap (S_i \times T_j) \right) \backslash \left(W' \cap (S_i \times T_j) \right) \right| \\
&\geq \sum\limits_{(i,j): (i,j) \text{ is bad}} \left|\left(\widehat{W}' \cap (S_i \times T_j) \right) \backslash \left(W' \cap (S_i \times T_j) \right) \right| \\
&\geq 2 \eps \sum\limits_{(i,j): (i,j) \text{ is bad}} \left|\widehat{W}' \cap (S_i \times T_j) \right| \\
&= 2 \eps \left| \bigcup_{i,j: (i,j) \text{ is bad}}  \widehat{W}' \cap (S_i \times T_j) \right|
\end{align*}
Therefore, we can conclude that 
\[\bigcup_{i,j: (i,j) \text{ is bad}}  \widehat{W}' \cap (S_i \times T_j) \] 
has at most $|E|/2$ edges, and therefore
\[\bigcup_{i,j: (i,j) \text{ is good}}  \widehat{W}' \cap (S_i \times T_j) \] 
has at least $(\frac{1}{2} - 3 \eps)|E|$ edges. Since $i$ and $j$ range from 0 to $\lceil \log(2z) \rceil -1$, there are at most $(\log(2z)+1)^2 \leq 2\log^2(z)$ good pairs, so there is some choice of positive integers $i^*$ and $j^*$ such that $\widehat{W}' \cap (S_{i^*} \times T_{j^*})$ has at least $\frac{(1 - 6 \eps)|E|}{4\log^2(z)}$ edges and $(i^*, j^*)$ is good, so $W' \cap (S_{i^*} \times T_{j^*})$ is $2 \eps$-close to $\widehat{W}' \cap (S_{i^*} \times T_{j^*})$. Taking $S := S_{i^*}$, $T := T_{j^*}$, $w_x^* = 2^{i^*}$, and $w_y^* = 2^{j^*}$ completes the proof.
\end{proof}

Note that Property 2 of Lemma \ref{same_weights} allows us to lower bound the number of edges in $W' \cap (S \times T)$. Since $W' \cap (S \times T)$ is $2 \eps$-close to $\widehat{W}' \cap (S \times T)$, we get that 
\[ |W' \cap (S \times T)| \geq (1 - 2\eps) |\widehat{W}' \cap (S \times T)| \geq \frac{(1 - 2\eps)(1 - 6 \eps) |E|}{4\log^2(z)}\]
Furthermore, note that each vertex in $X'_W$ has degree at most $d_X$ in $\mathcal{H}_{\widehat{W}'}$, and furthermore each vertex in $Y'_W$ has degree at most $d_Y$ in $\mathcal{H}_{\widehat{W'}}$. This can be seen by noting that $\mathcal{H}_{\widehat{W}}$ is isomorphic to $\mathcal{G}$, and we removed some edges when we removed BAD vertices. Combining this with the fact that $|E| = |X|d_X = |Y|d_Y$ and applying the lower bound on $|\widehat{W}' \cap (S \times T)|$, we can lower bound the sizes of the vertex sets $S$ and $T$ from Lemma \ref{same_weights}. Specifically, we get that $|S| \geq \frac{(1 - 6 \eps)|X|}{4\log^2(z)}$ and  $|T| \geq \frac{(1 - 6 \eps)|Y|}{4\log^2(z)}$.  \\


Now, in accordance with the proof outline, we would like to retrieve a large subset $M \subseteq X_W'$ such that, for all $\oline{x}, \oline{x}' \in M$ such that $\oline{x} \neq \oline{x}'$, we have that $\oline{x}_s \neq \oline{x}'_s$. Similarly, we want a large subset $N \subseteq Y_W'$ such that, for all $\oline{y}, \oline{y}' \in N$ such that $\oline{y} \neq \oline{y}'$, we have that $\oline{y}_s \neq \oline{y}'_s$.


\begin{lemma}
\label{one_to_one_rect}
There are sets $M \subseteq X'_W$ and $N \subseteq Y'_W$ such that:
\begin{enumerate}
\item $M$ contains at most one element of the set $\{ \oline{x} \in X'_W: \oline{x}_s = x\}$ for any fixed $x \in X$. Also, $N$ contains at most one element of the set $\{ \oline{y} \in Y'_W: \oline{y}_s = y \}$ for any fixed $y \in Y$.
\item $W' \cap (M \times N)$ is $8 \eps$-close to $\widehat{W}' \cap (M \times N)$
\item $|M| \geq \frac{(1 - 6 \eps)|X|}{8z\log^2(z)}$ and $|N| \geq \frac{(1 - 6 \eps)|Y|}{8z\log^2(z)}$
\end{enumerate}
\end{lemma}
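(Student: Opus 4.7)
The plan is to construct $M$ and $N$ by subsampling inside $S$ and $T$ from Lemma~\ref{same_weights}: for every distinct base-game vertex appearing in coordinate $s$, keep exactly one representative. Concretely, let $X^{*}=\{x\in X:\exists\,\oline{x}\in S \text{ with } \oline{x}_s=x\}$ and, for each $x\in X^{*}$, $\mathrm{cls}(x)=\{\oline{x}\in S:\oline{x}_s=x\}$; define $Y^{*}$ and $\mathrm{cls}(y)$ symmetrically. Property~3 of Lemma~\ref{same_weights} gives $|\mathrm{cls}(x)|\in[w_x^{*},2w_x^{*}]$ and $|\mathrm{cls}(y)|\in[w_y^{*},2w_y^{*}]$. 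I will pick $\pi(x)\in\mathrm{cls}(x)$ and $\sigma(y)\in\mathrm{cls}(y)$ uniformly and independently and set $M=\pi(X^{*})$, $N=\sigma(Y^{*})$. Property~1 of the lemma then holds by construction; for the size bound (Property~3), every $\oline{x}\in X'_W$ has weight at most $2z$ (no BAD vertices), so $|S|=\sum_{x\in X^{*}}|\mathrm{cls}(x)|\le 2z\,|X^{*}|$, whence $|M|=|X^{*}|\ge|S|/(2z)\ge(1-6\eps)|X|/(8z\log^{2}z)$ by the lower bound on $|S|$ derived just after Lemma~\ref{same_weights}, and the bound for $|N|$ is symmetric.

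The main step is Property~2, which I will prove by a one-line probabilistic-method argument. Let $B=(\widehat{W}'\setminus W')\cap(S\times T)$, so Property~2 of Lemma~\ref{same_weights} gives $|B|\le 2\eps\,|\widehat{W}'\cap(S\times T)|$. For every $e=(\oline{x},\oline{y})\in\widehat{W}'\cap(S\times T)$,
\[
\Pr[e\in M\times N]\;=\;\frac{1}{|\mathrm{cls}(\oline{x}_s)|\cdot|\mathrm{cls}(\oline{y}_s)|}\;\in\;\Bigl[\tfrac{1}{4w_x^{*}w_y^{*}},\,\tfrac{1}{w_x^{*}w_y^{*}}\Bigr],
\]
so linearity of expectation yields
\[
\mathbb{E}\bigl[|\widehat{W}'\cap(M\times N)|\bigr]\;\ge\;\tfrac{|\widehat{W}'\cap(S\times T)|}{4w_x^{*}w_y^{*}},\qquad \mathbb{E}\bigl[|B\cap(M\times N)|\bigr]\;\le\;\tfrac{|B|}{w_x^{*}w_y^{*}}.
\]
Combining with $|B|\le 2\eps\,|\widehat{W}'\cap(S\times T)|$ gives $\mathbb{E}[|B\cap(M\times N)|]\le 8\eps\cdot\mathbb{E}[|\widehat{W}'\cap(M\times N)|]$; since the random variable $|B\cap(M\times N)|-8\eps\,|\widehat{W}'\cap(M\times N)|$ has nonpositive mean, some realization of $\pi,\sigma$ achieves the pointwise bound $|B\cap(M\times N)|\le 8\eps\,|\widehat{W}'\cap(M\times N)|$, exactly the $8\eps$-closeness in Property~2.

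The only delicate point is this averaging step, and it works precisely because of the factor-of-two weight uniformity inside $S$ and $T$: it confines the per-edge survival probability into a window of multiplicative width four, so only a factor of four is lost when comparing $\mathbb{E}[|B\cap(M\times N)|]$ to $\mathbb{E}[|\widehat{W}'\cap(M\times N)|]$. This is exactly why the bucketing of Lemma~\ref{same_weights} was performed; without it the per-edge survival probabilities would vary too widely across $\widehat{W}'\cap(S\times T)$ to permit a clean comparison of the two expectations, and the $8\eps$-closeness would not follow.
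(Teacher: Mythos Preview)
Your argument is correct, and it takes a genuinely different route from the paper's. The paper proves Lemma~\ref{one_to_one_rect} by a deterministic covering argument: it labels the elements of each class $\mathrm{cls}(x)$ as $x^{1},\dots,x^{w_x^{\max}}$ (wrapping around, so each element gets at most two labels), sets $M_i=\{x^{i}:x\in X^{*}\}$ and $N_j$ analogously, observes that the family $\{M_i\times N_j\}$ covers $S\times T$ with multiplicity at most~$4$, and then pigeonholes to find a pair $(i^{*},j^{*})$ on which the bad-edge fraction is at most $8\eps$. Your probabilistic-method version is a clean alternative: you sample one representative per class uniformly at random, bound the per-edge survival probability within a multiplicative window of~$4$ using the weight uniformity, and conclude by comparing expectations. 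Both arguments exploit the factor-of-two bucketing of Lemma~\ref{same_weights} in exactly the same way and lose exactly the same factor of~$4$, turning the $2\eps$-closeness into $8\eps$-closeness; your approach is arguably slicker since it avoids the bookkeeping of the wrap-around labeling, while the paper's construction is explicit. One small point worth stating explicitly in your write-up is that $|\mathrm{cls}(x)|=w_x$ (not merely $\le w_x$): this holds because membership in $S=S_{i^{*}}$ depends only on $\oline{x}_s$, and since $w_x\le 2z$ no preimage of $x$ was removed as BAD, so all of $\{\oline{x}\in\mathrm{im}(f_X):\oline{x}_s=x\}$ lies in $S$. With that, Properties~1 and~3 indeed hold for \emph{every} realization of $\pi,\sigma$, and your averaging delivers Property~2.
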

\begin{proof}[Proof (of Lemma \ref{one_to_one_rect}):]

Start with the sets $S \subseteq X'_W$ and $T \subseteq Y'_W$ as well as $w_x^*$ and $w_y^*$ from Lemma \ref{same_weights}. We will show the existence of $M \subseteq S$ and $N \subseteq T$ with the desired properties. Set $w_x^{max} = \sf{min}(2w_x^*, 2z)$, and similarly set $w_y^{max} = \sf{min}(2w_y^*, 2z)$. We note that, for any $x \in X$ such that $\{\oline{x} \in S: \oline{x}_s = x \}$ is nonempty, we know that $w_x \leq w_x^{max}$, from Lemma \ref{same_weights} and our removal of BAD vertices, and we have a similar condition for vertices $y \in Y$.

For each vertex $x \in X$ such that $\{\oline{x} \in S: \oline{x}_s = x \}$ is nonempty, label each vertex in the set $\{\oline{x} \in S: \oline{x}_s = x \}$ as $x^1, \ldots, x^{w_x^{max}}$. Since $|\{\oline{x} \in S: \oline{x}_s = x \}| = w_x$ and $w_x^{max} \geq w_x$, each vertex gets a label. However, note that it is possible that $w_x^{max} > w_x$, in which case we wrap around with our labeling. Since Lemma \ref{same_weights} gives us that $w_x \geq w_x^* \geq w_x^{max} / 2$, we know that any vertex receives at most two labels. Similarly, for each vertex $y \in Y$ such that $\{\oline{y} \in T: \oline{y}_s = y \}$ is nonempty, label each vertex in the set $\{\oline{y} \in T: \oline{y}_s = y \}$ as $y^1, \ldots, y^{w_y^{max}}$. Once again we observe that every vertex gets a label and any vertex receives at most two labels.

For $i \in \mathbb{Z}$ such that $1 \leq i \leq w_x^{max}$, let
 \[M_i = \bigcup_{x \in X: \{\oline{x} \in S: \oline{x}_s = x \} \neq \emptyset } x^i\]
 Similarly, for $j \in \mathbb{Z}$ such that $1 \leq j \leq w_y^{max}$, let 
 \[N_j = \bigcup_{y \in Y: \{\oline{y} \in T: \oline{y}_s = y \} \neq \emptyset} y^j \]
 Since any vertex $\oline{x} \in S$ received at most two labels, note that it is present in $M_i$ for at most two choices of $i$. Similarly, any vertex $\oline{y} \in T$ is present in $N_j$ for at most two choices of $j$. Consider the sets of pairs of vertices given by
\begin{equation*} 
\label{final_partition}
\left\{ M_i \times N_j : 1 \leq i \leq w_x^{max}, 1 \leq j \leq w_y^{max} \right\} 
\end{equation*}

The union of these sets contains $S \times T$. Hence, every edge in $W' \cap (S \times T)$ is in $W' \cap (M_i \times N_j)$ for some choice of $i$ and $j$. Similarly, every edge in $ \widehat{W}' \cap (S \times T)$ is in $\widehat{W}' \cap (M_i \times N_j)$ for some choice of $i$ and $j$. Furthermore, as we noticed earlier, any vertex $\oline{x} \in S$ is in $M_i$ for at most two choices of $i$ and any vertex $\oline{y} \in T$ is in $N_j$ for at most two choices of $j$. Therefore, any fixed pair of repeated game vertices $(\oline{x}, \oline{y})$ only appears in $M_i \times N_j$ for at most 4 choices of $(i,j)$. Hence we know that 
\[ \sum_{i,j} |\widehat{W}' \cap (M_i \times N_j)| \geq \left|\bigcup_{i,j}\widehat{W}' \cap (M_i \times N_j) \right| \geq |\widehat{W}' \cap (S \times T)| \]
and that
\[ 
\sum_{i,j} |(\widehat{W}' \backslash W') \cap (M_i \times N_j)| \leq 4 |(\widehat{W}' \backslash W') \cap (S \times T)| \leq 8 \eps |\widehat{W}' \cap (S \times T)|
\]

Therefore, the average fraction of edges in $\widehat{W}' \cap (M_i \times N_j)$ that are also in $(\widehat{W}' \backslash W') \cap (M_i \times N_j)$ is at most $8\eps$. Therefore, there must be a fixing of $i^*$ and $j^*$ such that the set of edges $W' \cap (M_{i^*} \times N_{j^*})$ is $8 \eps$-close to $\widehat{W}' \cap (M_{i^*} \times N_{j^*})$, by pigeonhole. Furthermore, since $w_x^{max}, w_y^{max} \leq 2z$, we know that 
\[ |M_{i^*}| = \frac{|S|}{w_x^{max}} \geq \frac{|S|}{2z} \geq \frac{(1 - 6 \eps)|X|}{8z\log^2(z)}\]
and
\[ |N_{j^*}| = \frac{|T|}{w_y^{max}} \geq \frac{|T|}{2z} \geq \frac{(1 - 6 \eps)|Y|}{8z\log^2(z)}\]

By letting $M := M_{i^*}$ and $N := N_{j^*}$, we conclude the proof.
\end{proof}

We notice that Lemma \ref{one_to_one_rect} also gives us an explicit lower bound on the number of edges in $W' \cap (M \times N)$. Since none of the vertices in $M$ or $N$ are BAD by construction, we know that
\begin{equation}
\label{didnt_remove_good_vertices}
\widehat{W}' \cap (M \times N) = \widehat{W} \cap (M \times N)
\end{equation}
since to get from $\widehat{W}$ to $\widehat{W}'$ we only removed edges incident to BAD vertices. Also recall that $\mathcal{H}_{\widehat{W}}$ is isomorphic to $\mathcal{G}$, and therefore has the same expansion property as $G$, given by the expansion property of Lemma \ref{random_csp}. Since Property 3 of Lemma \ref{one_to_one_rect} lower bounds the size of $M$ and $N$, we can apply the expansion property of $\mathcal{H}_{\widehat{W}}$ to get:
\begin{equation}
\label{apply_expansion}
\left| \widehat{W} \cap (M \times N) \right| \geq (1 - \eps) \frac{d_X |M| |N|}{|Y|}
\end{equation}

By combining Item 2 of Lemma \ref{one_to_one_rect} and Equations \ref{didnt_remove_good_vertices} and \ref{apply_expansion}, we see that
\begin{equation}
\label{lower_bound_on_num_edges}
\left| W' \cap (M \times N) \right| \geq (1 - 8 \eps) \left| \widehat{W}' \cap (M \times N) \right| \geq (1 - 8 \eps)(1 - \eps) \frac{d_X |M| |N|}{|Y|}
\end{equation}

Now we can prove the main theorem.
\begin{proof} [Proof of Theorem \ref{main_thm}]

Take $M \subseteq X'_W$ and $N \subseteq Y'_W$ to be the sets given by Lemma \ref{one_to_one_rect}. Let $M_s = \{ \oline{x}_s: \oline{x} \in M \}$ and $N_s = \{ \oline{y}_s: \oline{y} \in N\}$ be the sets that result from projecting the repeated game vertices in $M$ and $N$ onto round $s$. Due to Property 1 of Lemma \ref{one_to_one_rect}, for every pair of vertices $\oline{x}^1, \oline{x}^2 \in M$, we know that $\oline{x}^1_s \neq \oline{x}^2_s$. Similarly, for every pair of vertices $\oline{y}^1, \oline{y}^2 \in N$, we know that $\oline{y}^1_s \neq \oline{y}^2_s$. Therefore, we see that 
\[ |M_s| = |M| \geq \frac{(1 - 6 \eps)|X|}{8z\log^2(z)} \]
and 
\[ |N_s| = |N| \geq  \frac{(1 - 6 \eps)|Y|}{8z\log^2(z)}
\] 
where the lower bounds follow from Property (3) of Lemma \ref{one_to_one_rect}. Furthermore, any assignment to vertices in $M$ and $N$ corresponds uniquely to an assignment to $M_s \subseteq X$ and $N_s \subseteq Y$, by simply restricting the assignment to vertices in $M$ and $N$ to round $s$. 

 Since $W' \cap (M \times N) \subseteq W_{\{s\}}$, we know that every edge in $W' \cap (M \times N)$ is satisfied in round $s$ by the assignment $\psi^H$. By restricting $\psi^H_X$ to $M$ and $\psi^H_Y$ to $N$, considering only round $s$ of this assignment, and applying the fact that each edge in $W' \cap (M \times N)$ corresponds to a unique edge in round $s$, we retrieve an assignment that satisfies $|W' \cap (M \times N)|$ edges in the rectangular subgame $G_{M_s \times N_s}$. By applying the expansion property of $\mathcal{G}$, we can upper bound the number of edges in this rectangle:
 \[ 
 \left| E \cap (M_s \times N_s) \right| \leq (1 + \eps) \frac{d_X |M| |N|}{|Y|}
 \]
 Hence, by applying Equation \ref{lower_bound_on_num_edges}, the fraction of constraints in $G_{M_s \times N_s}$ satisfied by our assignment is at least
 \[ 
 \frac{|W' \cap (M \times N)|}{\left| E \cap (M_s \times N_s) \right|} \geq \frac{(1 - 8 \eps)(1 - \eps)}{1 + \eps} > 1 - 11\eps \geq \text{val}(G) + \eps
 \]
 due to our assumption on $\val(G)$. This, along with the fact that
 \begin{align*}
  \delta &= \frac{1}{16\Phi_{\scheme, \Sigma}(n)\log^2(\Phi_{\scheme, \Sigma}(n))} \\
  &\leq \frac{1}{16z\log^2(z)} \\
  &\leq \frac{1-6\eps}{8z\log^2(z)}
  \end{align*}
 means that we contradict the fact that $G$ is $(\delta, \eps)$-fortified.
\end{proof}

\paragraph{Remark about handling Parallel Edges.}

We end this section by remarking on why parallel edges can be problematic and how we handle them. In the last step of the proof, we lift round $s$ of the assignment $\psi^H$ on the rectangle $M \times N \subseteq X^k \times Y^k$ to an assignment for the rectangular subgame $G_{M_s \times N_s}$. We argued that each edge in the edge set $W' \cap (M \times N)$ lifted to a distinct edge in $G_{M_s \times N_s}$, by virtue of the fact that each vertex in $M$ and $N$ is distinct in round $s$. This is valid when $W'$ is a set of edges, rather than a multiset; however, if we considered $W'$ to be a multiset and it had parallel edges, this may no longer be true. Two distinct, but parallel, edges in $W'$, could lift to only one distinct edge in $G_{M_s \times N_s}$, in which case we lose an edge! In the case when the number of parallel edges is small (i.e. $\leq \eps |E|$), we can prevent this inconvenience by effectively ignoring the parallel edges. 

Concretely, we can make $W$ a multiset that has no parallel edges by ignoring parallel edges in the domain of the embedding map (i.e. each pair of vertices that appears in $W$ has multiplicity 1). Since the number of parallel edges is small, we will still have $|W| \geq (1 - 2 \eps) |E|$ and that $W$ is $2 \eps$-close to a multiset of edges $\widehat{W}$, where $\mathcal{H}_{\widehat{W}}$ is isomorphic to $G$, parallel edges and all. By naturally extending the notion of $\eps$-closeness to multisets, and defining the intersection of a multiset and a set to preserve multiplicity (i.e. $\{1, 1, 1, 2\} \cap \{1\} = \{1, 1, 1\}$), our arguments naturally extend to this case without any further change. 

 For completeness, we conclude with a note about the number of parallel edges in the random games we provide in Appendix~\ref{sec:random_games_fortified}. As long as $200d^2 < \eps |E|$, the random games we generate have sufficiently few parallel edges for our Main Theorem to apply. When $200 d^2 > \eps |E|$, since $|E| = d |X|$, we must have that $d = \Omega(|X|)$. For this regime of $d$, we can simply use a free game with random constraints. It can be seen by the analysis in Claim~\ref{random_makes_fortified} of Appendix~\ref{sec:random_games_fortified} that this game is sufficiently fortified and satisfies the conditions we need for the Main Theorem.

%

\section{Robust embeddings in existing proofs of parallel repetition}
\label{sec:raz-hol}
Here we show that Raz's proof of the parallel repetition theorem directly implies a robust embedding from $G$ into $G^k$. Raz's proof was significantly simplified by Holenstein in ~\cite{Hol07}. Throughout this section, we will follow Rao's presentation ~\cite{Rao09} of Raz's proof with Holenstein's simplification. From now on, we will refer to this proof as the Raz-Holenstein proof of parallel repetition.

The engine behind the Raz-Holenstein proof of parallel repetition theorem is the following lemma. 

	\begin{lemma}[Main lemma of \cite{Rao09}]
	\label{rao_main_lemma}
	Let $C \subseteq [k]$. Let $G$ be a game with $\val(G) = 1 - \eps$, where one of the provers gives answers from a set of size $2^c$, and there exists a strategy $\psi$ for $G^k$ under which
	$$
		\Pr [ W_C] \geq 2^{- \frac{\eps^2(k - |C|)}{34^2} + |C|c}.
	$$
	Then there exists an $i \notin C$ such that $\Pr [ W_i | W_C]  \leq \val(G) + \eps/2 = 1 - \eps/2$.
	\end{lemma}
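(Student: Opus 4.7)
The plan is to prove the lemma by contradiction using the Raz--Holenstein correlated sampling argument. Assume that for every $i \notin C$ we have $\Pr[W_i \mid W_C] > 1 - \eps/2$; the goal is to construct a strategy for the base game $G$ that wins with probability strictly greater than $\val(G) = 1 - \eps$, contradicting the definition of $\val$.

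The extraction strategy embeds a base-game question pair $(x, y)$ into coordinate $i$ of an instance of $G^k$ and plays $\psi$. More precisely, on input $(x, y)$ sampled from the question distribution $D$ of $G$, the two provers use public randomness to select a coordinate $i \notin C$ and to non-interactively sample a completion $(\vec x_{-i}, \vec y_{-i})$ intended to be distributed as $D \mid W_C, X_i = x, Y_i = y$ restricted to the remaining coordinates. They then run $\psi$ on their respective $k$-tuples and output only the answers in coordinate $i$. If the completion were exactly from the target conditional distribution, the success probability would be exactly $\Pr[W_i \mid W_C] > 1 - \eps/2$, so the heart of the proof is controlling the total-variation error of the non-interactively sampled completion.

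That error is controlled by an information-theoretic chain-rule argument. Writing $V = W_C$, one has $D_{KL}(D_{\vec X \vec Y \mid V} \,\|\, D_{\vec X \vec Y}) = \log(1/\Pr[V])$. Since the $k$ coordinates are independent under $D$, the KL divergence decomposes across coordinates, and after absorbing the entropy of the $|C|$ conditioning answers (each from an alphabet of size $2^c$), one obtains
$$\sum_{i \notin C} D_{KL}(D_{X_i Y_i \mid V} \,\|\, D_{X_i Y_i}) \;\leq\; \log(1/\Pr[V]) + O(|C|c) \;\leq\; \eps^2(k - |C|)/34^2 + O(|C|c),$$
by hypothesis. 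Averaging over $i \notin C$, a typical coordinate has divergence $O(\eps^2)$, and Pinsker's inequality yields total-variation distance $O(\eps)$ between $D_{X_i Y_i \mid V}$ and $D_{X_i Y_i}$. The same accounting shows that for a good $i$, the one-sided conditional distributions $D_{\vec X_{-i} \mid V, X_i = x}$ and $D_{\vec Y_{-i} \mid V, Y_i = y}$ are $O(\eps)$-close in expectation to a common joint distribution on the completion.

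The final step is Holenstein's correlated sampling lemma: two parties holding distributions on a common space that are $\eta$-close in total variation can use shared randomness to produce samples that agree with probability $1 - O(\eta)$. Plugging in $\eta = O(\eps)$ allows the two provers to jointly sample a completion whose distribution is $O(\eps)$-close to the target conditional; hence the extracted strategy wins $G$ with probability at least $\Pr[W_i \mid V] - O(\eps) > 1 - \eps/2 - O(\eps) > 1 - \eps = \val(G)$, the desired contradiction. The main obstacle is quantitative: calibrating the constants so that Pinsker's loss, the correlated sampling loss, and the entropy contribution from conditioning on $V$ together remain below $\eps/2$. This bookkeeping is precisely what forces the constant $34$ in the exponent of the hypothesis.
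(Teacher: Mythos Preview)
Your proposal is correct and follows essentially the same Raz--Holenstein route that the paper uses. The paper factors the argument through an ``Implicit Lemma'' asserting the existence of randomized embedding maps $g_X, g_Y$ whose output distribution is $\eps/2$-close to the distribution on $W_C$, and then derives Lemma~\ref{rao_main_lemma} by running $\psi$ on the embedded questions; the sketch of the Implicit Lemma is exactly your KL chain-rule plus dependency-breaking variable plus correlated sampling, so the two presentations differ only in packaging.
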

	Here, we use $W_C$ to denote the event that the provers succeed in the rounds indexed by $C$; note that this event depends on the strategy used by the provers. We use $W_{\{i \}}$ to denote the event that the provers win round $i$.
	
	From Lemma~\ref{rao_main_lemma}, the parallel repetition theorem follows in a straightforward manner. We want to show that the probability of winning every round in $G^k$, $\Pr [W_{[k]}]$, is $2^{-\gamma k}$ for some constant $\gamma$. We accomplish this by iteratively building a subset of rounds $C \subseteq [k]$ such that either $\Pr [W_C] < 2^{-\gamma k}$ (in which case we're done, because $\Pr[W_{[k]}] \leq \Pr[W_C]$), or otherwise, by upper bounding $\Pr [W_{\{ i\}}| W_C]$ for some $i \notin C$, we conclude that $\Pr [W_{\{ i \} \cup C}] < (1 - \eps/2) \Pr [W_C]$ and recurse with $C' = C \cup \{ i \}$.  After repeatedly applying this lemma at most $\beta k$ times, we can conclude that $\Pr [W_{[k]}] \leq \max \{ 2^{- \gamma k}, (1 - \eps/2)^{\beta k} \}$, which proves the parallel repetition theorem.

Implicit in the proof of Lemma~\ref{rao_main_lemma} is the following lemma, which demonstrates the existence of a robust embedding of $G$ into $G^k$.

\begin{lemma}[Implicit Lemma in \cite{Rao09}]
\label{implicit_lemma}
	Let $C \subseteq [k]$ be such that
	$$
		\Pr [W_C] \geq 2^{- \frac{\eps^2(k - |C|)}{34^2} + |C| \cdot c}.
	$$
	Then there exist randomized maps $g_X: R \times X \to X^k$ and $g_Y : R \times Y \to Y^k$ for some finite set $R$ such that
	\begin{enumerate}
	\item For all $r \in R$, there exists a round $i \in [k] \backslash C$ such that for all $(x,y) \in X \times Y$, we have $g_X(r,x)_i = x$ and $g_Y(r,y)_ i = y$.
	\item The distribution of $(g_X(r,x),g_Y(r,y))$ over a uniformly chosen $r \in R$ and $(x,y) \in E$ is $\eps/2$-close in statistical distance to the distribution of $(\oline{x},\oline{y})$ in $G^k$ when conditioned on the event $W_C$.
	\end{enumerate} 
		
\end{lemma}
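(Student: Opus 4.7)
The plan is to extract the embedding implicit in the Raz--Holenstein proof of Lemma~B.1, following the structure of~\cite{Rao09}. The core of that proof shows that if $\Pr[W_C]$ is large then the distribution of $(\oline{x},\oline{y})$ conditioned on $W_C$ is close, on some coordinate $i\notin C$, to a distribution that the two provers can sample via shared randomness plus a local per-prover step. The randomized maps $g_X,g_Y$ will be exactly the per-prover sampling procedures, with the shared randomness $r$ encoding the conditioning data.

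First I would introduce Holenstein's \emph{dependency-breaking} random variables. For each $j \in [k]\setminus C$ let $T_j$ be an auxiliary variable that splits the randomness of round $j$ into an ``Alice-side'' piece and a ``Bob-side'' piece, so that conditioned on $T_j$ the questions $\oline{x}_j$ and $\oline{y}_j$ are (approximately) independent and each prover can sample its own half given $T_j$ and its own coordinate. Let $Q$ be the tuple of questions and strategy-answers in the rounds of $C$, together with enough further randomness to fix the event $W_C$. The joint distribution $P := \mathrm{Law}(\oline{x},\oline{y},T_{[k]\setminus C},Q \mid W_C)$ will be the target distribution we need to simulate.

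Next I would choose the distinguished coordinate $i \notin C$ using the information-theoretic averaging step from Rao's proof. Because $\Pr[W_C] \ge 2^{-\eps^2(k-|C|)/34^2 + |C|c}$, the KL divergence of $P$ from the product (unconditional) distribution is bounded, and Holenstein's chain-rule argument yields an $i \in [k]\setminus C$ such that
\[
\E_{(t,q) \sim (T_{-i},Q)\mid W_C} \bigl\| P_{X_i Y_i \mid T_{-i}=t,\,Q=q} - U_E \bigr\|_{TV} \;\le\; \eps/2,
\]
where $U_E$ is the uniform distribution on $E$ and $T_{-i} := (T_j)_{j \in [k]\setminus (C \cup \{i\})}$.

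I would then define the embedding. Let $R$ consist of a sample $(t,q)$ from the $W_C$-conditional joint law of $(T_{-i},Q)$ together with whatever additional bits are required by Raz's correlated-sampling subroutine. On input $x \in X$, set $g_X(r,x)$ to be the tuple whose $i$-th coordinate is $x$ and whose remaining coordinates are obtained by sampling $\oline{x}_{-i}$ from the Alice-side conditional induced by $(t,q)$ and $X_i = x$; define $g_Y(r,y)$ symmetrically. Property~(1) of the lemma is then immediate from the construction: for every $r$ the same coordinate $i$ is the embedded one, with $g_X(r,x)_i=x$ and $g_Y(r,y)_i=y$. Property~(2) follows by combining the TV bound above with Holenstein's correlated-sampling lemma: conditioned on $(t,q)$, the prover-local samples produce the correct joint law up to the TV error on coordinate $i$, and averaging over $(t,q)$ absorbs the remaining slack into a total $\eps/2$ statistical distance from $\mathrm{Law}(\oline{x},\oline{y}\mid W_C)$.

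The main obstacle is the correlated-sampling step. When Alice and Bob receive a base-game edge $(x,y)\sim U_E$ rather than a pair drawn from $P_{X_iY_i\mid T_{-i}=t,Q=q}$, the two prover-side samplings are no longer exactly consistent with a single draw from $P$; one must invoke Raz's rejection-sampling / correlated-sampling trick and show that the TV cost is controlled by the bound in the displayed inequality above. This is precisely where Holenstein's simplification of Raz is used, and it is what forces the $\eps/2$ slack in Property~(2) rather than exact equality. All remaining pieces---verifying that the shared randomness $r$ can be taken to live in a finite set $R$, and that $i$ can be chosen uniformly for each $r$---are bookkeeping.
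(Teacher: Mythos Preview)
Your proposal is correct and follows essentially the same approach as the paper's sketch: introduce the Raz--Holenstein dependency-breaking variable, use an averaging/information-theoretic argument over coordinates $i\notin C$ to find a good round, and then invoke correlated sampling so that each prover can locally generate its half of a sample from the $W_C$-conditioned distribution with its own question plugged into coordinate $i$. The paper only gives a high-level sketch of this argument (citing Rao's presentation), and your write-up is in fact somewhat more detailed in spelling out the role of the KL bound and the chain-rule step; the content and structure match.
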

%
%

First, we claim that Implicit Lemma very directly implies the existence of a robust embedding from $G$ into a coordinate of $G^k$. Indeed, assume that Lemma~\ref{implicit_lemma} is true. Let $\gamma: [k] \to \mathbb{R}$ be defined as $\gamma(t) = 2^{- \frac{\eps^2(k - t)}{34^2} +t \cdot c}$. For each $C$, if $\Pr[W_C] < \gamma(|C|)$, then we let $\emb$ be an arbitrary embedding map. If $\Pr[W_C] \geq \gamma(|C|)$, then there exist randomized maps $g_X$ and $g_Y$ satisfying Property 2 of Lemma~\ref{implicit_lemma}. Furthermore, by averaging, there must exist an $r^* \in R$ such that $\Pr_{(x,y) \in E} [ (g_X(r^*,x), g_Y(r^*,y)) \in W_C ] \geq 1 - \eps/2$. Let $\emb(x,y) = (g_X(r^*,x),g_Y(r^*,y))$. This shows that there is a $(\gamma(t),\eps/2)$-robust embedding of $G$ into a coordinate of $G^k$.

Furthermore, the Implicit Lemma also implies Lemma~\ref{rao_main_lemma}:

\begin{proof}[Proof that Lemma \ref{implicit_lemma} Implies Lemma \ref{rao_main_lemma}]
We assume the Implicit Lemma. Let $C \subseteq [k]$ be as described in the statement of the lemma, and let $g_X: R \times X \to X^k$ and $g_Y : R \times Y \to Y^k$ be the randomized embedding maps.

We now describe a strategy for the provers to play the base game $G$. The provers are given $x$ and $y$ where $(x,y)$ is a uniform edge from $E$. The two provers, using shared randomness, sample a uniformly random $r \in R$. The first prover computes $\oline{x} = g_X(r,x)$ and then $\oline{a} = \psi_X(\oline{x})$. The first prover answers with $\oline{a}_i$.  The second prover computes $\oline{y} = g_Y(r,y)$ and then $\oline{b} = \psi_Y(\oline{y})$. The second prover answers with $\oline{b}_i$. 

Since this is a strategy for $G$, the probability that the provers win is at most $\val(G)$. On the other hand, since the distribution of $(\oline{x},\oline{y})$ generated by the provers is $\eps/2$-close to the distribution of $(\oline{x}_i,\oline{y}_i)$ in the subgame $W_C$, we have that the probability the provers win using this strategy is at least $\Pr[ W_i | W_C ] - \eps/2$.

Thus we have $\Pr[W_i | W_C] \geq \val(G) + \eps/2$.

%
%
%
\end{proof}

Finally, for completeness, we give a high-level sketch of how the Implicit Lemma is proved. This argument follows the Raz-Holenstein proof of the parallel repetition theorem. Let $C \subseteq [k]$ be a set of coordinates such that $\Pr[W_C] \geq \gamma(|C|)$. Let $\oline{X},\oline{Y}, \oline{A},\oline{B}$ denote the random variables corresponding to the questions and answers of the provers when playing $G^k$. The randomized maps $g_X$ and $g_Y$ will correspond to a protocol where the first prover (who receives a question $x\in X$) and the second prover (who receives $y\in Y$) utilize shared randomness $R$ in order to agree on a coordinate $i \in [k] \backslash C$, and produce questions $\oline{x} \in X^k$ and $\oline{y} \in Y^k$, respectively, so that $\oline{x}_i = x$, $\oline{y}_i = y$, and furthermore, their outputs $(\oline{x},\oline{y})$ are (approximately) distributed the same way as $(\oline{X},\oline{Y})$ are, conditioned on the event $W_C$.

The key to this protocol, and the cornerstone of the Raz-Holenstein parallel repetition theorem is the \emph{dependency-breaking random variable} $Q$, which resides in the same probability space as $\oline{X},\oline{Y}, \oline{A},\oline{B}$. This random variable has the property that, conditioned on $Q$ and (say) the first prover's question $x$, the repeated questions $\oline{X}$ and $\oline{Y}$ are \emph{independent}. Furthermore, the variable $Q$ has the remarkable property that the following distributions are close in statistical distance\footnote{Technically speaking, they are close on average over $i$, $x$, and $y$.}:
$$
p(Q | \oline{X}_i = x, W_C) \approx p(Q | \oline{X}_i = x, \oline{Y}_i = y, W_C) \approx p(Q | \oline{Y}_i = y, W_C)
$$
where by $p(Q | \oline{X}_i = x, W_C)$, for example, we mean the distribution of $Q$ conditioned on $\oline{X}_i = x$ and the event $W_C$. Using a beautiful technique called \emph{correlated sampling}, the two provers can use shared randomness to (approximately) jointly sample $Q$ from the distribution $p(Q | \oline{X}_i = x, \oline{Y}_i = y, W_C)$, even though they only know one of $x$ or $y$, but not both. 

Since $i$ was picked randomly, with high probability the distribution of $(\oline{X}_i,\oline{Y}_i)$ conditioned on $W_C$ will also be close to the distribution of questions in the original game $G$. This implies that the final distribution of the output of the maps $g_X$ and $g_Y$ will be close to the distribution of $(\oline{X},\oline{Y})$ conditioned on $W_C$, which is what we desired.

In addition to the Raz-Holenstein proof, nearly all subsequent proofs of parallel repetition fall into the embedding framework, including the works of Rao~\cite{Rao09}, Moshkovitz~\cite{M14}, and Braverman-Garg~\cite{BG}. We also believe that the analytical proof of parallel repetition given by Dinur and Steurer in ~\cite{DinurSteurer} falls under this framework. 


\section{A Contrived Example for Derandomized Parallel Repetition}
\label{contrived_example}
In this section we show that we cannot hope to obtain a strong no-go theorem that rules out any derandomized parallel repetition in the high degree regime, the same spirit as the result of Feige and Kilian. This is because there \emph{is} a parallel repetition scheme that, when applied to some games, actually reduces the value in a very randomness-efficient manner. We use Dinur's graph powering gap amplification scheme, which is a highly randomness-efficient parallel repetition scheme. For any $\eps > 0$, we construct a game $G$ with value  $\geq 1/8$, such that the application of graph powering to $G$ yields a game $H$ with value at most $\eps$, and the randomness complexity of $H$ is $\log |G| + f(1/\eps)$ for some function $f$. If $|G|$ is a growing parameter, then for constant $\eps$, this is much less than $O(\log \frac{1}{\eps}) \cdot \log |G|$, the randomness complexity that would be needed if we used standard parallel repetition to reduce the value from $1/8$ to $\eps$.

Unfortunately, this doesn't show that graph powering is a useful derandomized parallel repetition scheme\footnote{In fact, Bogdanov constructs games for which graph powering fails to achieve any error reduction at all~\cite{Bogdanov}.}. The game $G$ is constructed by first taking a game $G_{low}$ with value $\eps$, and ``hiding'' it in a high value game $G$ with value at least $1/8$. The game $H$ produced by graph powering ``uncovers'' $G_{low}$, and thus $\val(H) \leq \val(G_{low}) \leq \eps$. However, intuitively the error reduction was not obtained by graph powering per se, but rather came from a ``planted'' game that had low value to begin with. This shows that the degree-dependent lower bound of Feige and Kilian is in a sense tight, and thus to obtain stronger no-go results for derandomized parallel repetition, we turn to investigating proof strategies, which is the focus of our paper.
\subsection{The Derandomized Parallel Repetition Scheme: Graph Powering}
\label{gap_amplification}
Specifically, the derandomized parallel repetition scheme we use is graph powering, well-known from the gap amplification scheme of Dinur. This transforms a graph $G = (V,E)$ to a graph $G^{*t} = (V', E')$. In this graph, we have that $V' = V$, and each vertex $v \in V'$ intuitively corresponds to the ``cloud'' of vertices reachable from $v \in V$ in $t$ steps. Furthermore, each edge in $E'$ corresponds to a $(2t+1)$-step random walk in $E$. The prover is supposed to give each vertex $v' \in V'$ a super-label that contains labels for each of the vertices in its ``cloud,'' and each edge $e' = (u', v') \in E'$ checks that: 1) the labels to $u'$ and $v'$ are valid and 2) there is consistency in the labels of all the vertices shared between the cloud of $u'$ and the cloud of $v'$.

The graph powering method described above is a form of derandomized parallel repetition. If we let $d$ denote the maximum degree of the graph $G$, selecting a random edge in $G^{*t}$ takes $\log |V| + (2t + 1) \log d$ bits of randomness, as edges in $G^{*t}$ are simply $(2t+1)$-length random walks. Note that with $t$ and $d$ being constant, this is an extremely randomness efficient way to ask many questions. The main problem with using this as a derandomized parallel repetition scheme is that it is unclear how to prove that the value of $G^{*t}$ is decaying with increasing $t$. 
However, in this section we will create games $G$ for which the value of $G^{*t}$ is significantly lower than the value of $G$, and hence be able to use graph powering as derandomized parallel repetition. In fact, we will only need to focus on the case where $t=2$: that is, in this section, we will construct games $G$ where the value of $G^{*2}$ is much lower than the value of $G$. 

We also observe that the alphabet size of $G^{*t}$ is $|\Sigma|^{d^t}$ (since we are asking for labels to all the vertices reachable in $t$ steps from a vertex $v$). In our construction, $|\Sigma|$, $d$, and $t$ are all constant (relative to the size of the game), and thus the alphabet size is constant.


\subsection{A Sketch of the Construction}
The rough outline of the construction is as follows:
\begin{enumerate}
\item Start with a two prover game $G' = (X', Y', E', \Sigma, \mathcal{C})$, where $\mathcal{C}$ denotes the constraints on the game $G'$, that has low constant value $\varepsilon$, has a constant sized alphabet, and has constant degree. 
\item Use \emph{composition} to transform $G'$ into a game $G$ over the alphabet $\{0,1\}^3$. An exposition on composition can be found in Section 5 of~\cite{Dinur}. Roughly speaking, by composition we mean that we replace each constraint in the game $G'$ with a gadget that encodes the constraint, but is itself a game over alphabet $\{ 0,1\}^3$. Such gadgets are called \emph{assignment testers}, and have a size that depends only on the alphabet size of $G'$. The game $G$ that we get after composition necessarily has high value, as a random strategy satisfies at least $1/8$th of the constraints. More details can be found in Section \ref{contrived-step-2} below.
\item Use graph powering on $G$ to get the game $G^{*2}$, which will have value at most that of $G'$. 
\end{enumerate}
This construction works by using composition to hide the low value game $G'$ inside the high-value game $G$. However, the hiding was performed in a local fashion that can easily be uncovered by graph powering. Namely, the game $G^{*2}$ will contain constraints of $G'$, and hence have low value. Furthermore, due to the constant degree and alphabet size of $G'$, the game $G^{*2}$ will have very low randomness complexity -- no more than the randomness complexity of $G$ plus an additive constant. We now go into each step in further detail.

\subsection{Step 1: A Game with Low Value}
\label{contrived-step-1}
We start with a two player game $G' = (X', Y', E', \Sigma, \mathcal{C})$ with $\text{val}(G') < \varepsilon$, and the alphabet size and degree are functions of $1/\eps$. Since we think of $\eps$ as a constant, the alphabet size and degree are also constant.

\subsection{Step 2: Composition}
\label{contrived-step-2}
Recall that our goal in this section is to transform the game $G'$ into a game $G$ over the alphabet $\{ 0,1\}^3$. For this we will use composition with \emph{assignment testers} as described in Definition 5.1 in~\cite{Dinur}. We define assignment testers below:

\begin{definition}[Assignment Tester, Definition 2.2 from~\cite{Dinur}]
\label{def:assignment_tester}
An Assignment Tester with alphabet $\Sigma_0$ and rejection probability $\varepsilon > 0$ is an algorithm $\mathcal{P}$ whose input is a circuit $\Phi$ over Boolean variables $X$, and whose output is a constraint graph $G = ((V, E), \Sigma_0, \mathcal{C})$ such that $V \supset X$ and the following hold. Let $V' = V \backslash X$, and let $a: X \to \{0,1 \}$ be an assignment. 
\begin{itemize}
\item (Completeness) If $a \in \text{SAT}(\Phi)$, there exists $b: V' \to \Sigma_0$ such that $\text{UNSAT}_{a \cup b}(G) = 0$.
\item (Soundness) If $a \not\in \text{SAT}(\Phi)$, then for all $b: V' \to \Sigma_0$, we have $\text{UNSAT}_{a \cup b}(G) \geq \varepsilon \cdot \text{rdist}(a, \text{SAT}(\Phi))$.
\end{itemize}
where $\text{rdist}(a, S) = \text{min}_{s \in S} \frac{|a \oplus s|}{|V|}$ denotes the minimum relative Hamming distance between $a$ and elements of the set $S$, $\text{SAT}(\Phi)$ is the set of satisfying inputs to $\Phi$, and $\text{UNSAT}_{a \cup b}(G)$ is the fraction of constraints of $G$ that are unsatisfied by the assignment induced by $a$ and $b$.
\end{definition}
Additionally, Theorem 5.1 of~\cite{Dinur} gives us that there are explicit assignment testers over $\{0,1\}^3$ for a certain $\varepsilon > 0$. 

Using assignment testers, we can describe the composition of a game $G$ and an assignment tester $\mathcal{P}$. For this, we will use an error correcting code $e: \Sigma \to \{ 0,1\}^{\ell}$, where $\log_2 |\Sigma| \leq \ell \leq c \cdot \log_2 |\Sigma|$ for some constant $c$. 
\begin{definition}[Composition, Definition 5.1 from~\cite{Dinur}]
\label{def:composition}
Let $G = ((V, E), \Sigma, \mathcal{C})$ be a constraint graph and let $\mathcal{P}$ be an assignment tester. Let $e: \Sigma \to \{ 0,1\}^{\ell}$ be an encoding as described above with relative distance $\rho > 0$. The constraint graph $G \circ \mathcal{P} = ((V', E'), \Sigma_0, \mathcal{C}')$ is defined in two steps:
\begin{itemize}
\item (Robustization): First, we convert each constraint $c(e) \in \mathcal{C}$ to a circuit $\tilde{c}(e)$ as follows. For each variable $v \in V$, let $[v]$ be a fresh set of $\ell$ Boolean variables. For each edge $e = (v, w) \in E$, $\tilde{c}(e)$ will be a circuit on $2 \ell$ Boolean variables $[v] \cup [w]$ that outputs 1 iff the assignment for $[v] \cup [w]$ is a legal assignment for $v$ and $w$ that would have satisfied the constraint $c$ on $(v,w)$.
\item (Composition): Run the assignment tester $\mathcal{P}$ on each $\tilde{c}(e)$. Let $G_e = ((V_e, E_e), \Sigma_0, \mathcal{C}(e))$ denote the resulting constraint graph, and recall that $[v] \cup [w] \subset V_e$. Assume, wlog, that $E_e$ has the same cardinality for each $e$. Define the new constraint graph $G \circ \mathcal{P} = ((V', E'), \Sigma_0, \mathcal{C}')$ by
\[ V' = \bigcup_{e \in E} V_e \hspace*{3cm} E' = \bigcup_{e \in E}E_e \hspace*{3cm} \mathcal{C}' = \bigcup_{e \in E}\mathcal{C}_e\]
\end{itemize}
\end{definition}
As noted in~\cite{Dinur}, the output graph $G_e$ of an assignment tester $\mathcal{P}$ when it is used in composition above has size that depends only on the alphabet size of the game $G'$, which is a constant. Hence, the size of $G_e$ is also a constant. Furthermore, it can be seen from Definitions \ref{def:assignment_tester} and \ref{def:composition} that $G_{(u,v)}$ can have all its constraints satisfied if and only if the assignments given to $[u]$ and $[v]$ are legal assignments for $u$ and $v$ that satisfy the constraint $c((u,v))$.  

We will consider the modified assignment tester $\mathcal{P}'$, which acts as follows. It runs $\mathcal{P}$ on the input, and looks at the resulting constraint graph $H$. It then adds all missing edges to $H$ to create a complete graph $\overline{H}$, and puts trivially satisfied constraints on all of them. It can be seen that if $H$ had constant size, then so does $\overline{H}$. Note that the constraints of $H$ are all satisfiable if and only if the constraints of $\overline{H}$ are all satisfiable. Hence, the output graphs of the assignment tester $\mathcal{P}'$ also satisfy the property that all of its constraints are satisfiable if and only if the input variables encoded a satisfying a legal and satisfying assignment to the input constraint. 

We will define the constraint graph $G$ as $G' \circ \mathcal{P}'$. The high connectivity of each gadget $\overline{H}$ will be very useful to us in Step 3.

This process gives us a constraint graph $G$ with $\text{val}(G) \geq 1/8$, since a random strategy can achieve $\val(G) \geq 1/8$ in games over an alphabet of size 8.

\subsection{Step 3: Randomness-Efficient Parallel Repetition via Graph Powering}
\label{contrived-step-3}
 Fix a vertex $v$ in the game $G$. This vertex lies in $G_{(u',w')}$ for some $(u',w') \in E'$, where $G_{(u', w')}$ denotes the output of the assignment tester $\mathcal{P}'$ on $[u']$ and $[w']$. Now consider the graph $G^{*2}$. The label to $v$ in $G^{*2}$ claims labels to all vertices in $G_{(u',w')}$ due to the fact that $G_{(u',w')}$ is a complete graph. This label is valid if and only if all the constraints in $G_{(u',w')}$ are satisfied, which occurs if and only if the labels to $[u']$ and $[w']$ encode valid and satisfying labels for the edge $(u', w') \in E'$. Therefore, even picking a uniform vertex in $G^{*2}$ and testing the validity of its label already performs a uniform test in $G'$, and hence $\text{val}(G^{*2}) \leq \text{val}(G') < \varepsilon$. 

As discussed in Section \ref{gap_amplification}, the amount of randomness used to sample a random constraint in $G^{*2}$ consists of the randomness to query a single vertex of $G^{*2}$, which consists the randomness required to select a single vertex of $G$, and the randomness required to take a two step random walk in $G$. The degree of $G$ is a function of two things: the size of the output graphs of the assignment testers and the degree of $G'$. Both of these are constant in our setting, and so taking a two step walk on $G$ takes constant amount of randomness. Hence, using a derandomized parallel repetition scheme, we can transform a game $G$ with $\text{val}(G) \geq 1/8$ to a game $G^{*2}$ with $\text{val}(G^{*2}) < \varepsilon$ for an arbitrarily small constant $\varepsilon$, where the size of $G^{*2}$ is $|G^{*2}| = c(\Sigma, d)|G|$, and $\Sigma$ and $d$ denote the alphabet size and degree of the game $G'$. Since $\Sigma$ and $d$ are functions of $1/\eps$, for constant $\eps$ these are also constant.

We note that to get soundness $\varepsilon$ will normal parallel repetition, we would have had to repeat the game at least $k = \log_8 \frac{1}{\varepsilon}$ times, and so the size of this game $G^k$ would be $|G|^k = \Omega(|G|^{\log \frac{1}{\varepsilon}})$. We can see that $G^{*2}$ is considerably smaller than this, and is in fact almost-linear in $|G|$.

\section{Random games are fortified}
\label{sec:random_games_fortified}
In this section we prove that randomly sampled $d$-regular bipartite graphs are fortified with high probability, and can therefore be used as input games to the Main Theorem. Formally, we prove the following:
\begin{lemma}
\label{random_csp}
	Let $0 < \eta, \delta < 1$. Let $0 < \beta < 1/2$. Let $t$ be an integer and let $\Sigma$ be a finite alphabet. Let $d > \frac{4 (1 + \ln |\Sigma|)}{\eta^2 \delta^2}$. Let $\mathcal{G} = ([t] \times [t],E)$ be a bipartite graph that is the union of $d$ random perfect matchings $M^1,\ldots,M^d$, and let $G = (X,Y,E,\pi,\Sigma)$ be a game where $X = Y = [t]$ and for each edge $e \in E$, $\pi_e$ is a randomly chosen subset of $\Sigma \times \Sigma$ of density $\beta$. Then the following properties hold with probability at least $.99$:
	\begin{enumerate}
		\item $\mathcal{G}$ is $d$-regular, and has at most $200d^2$ parallel edges.
		\item For all $S,T \subseteq [t]$ with $|S|,|T| \geq \delta t$, we have
		$$
			\left | \frac{| E \cap (S \times T) |}{|S| |T|} - \frac{d}{t} \right | \leq  \eta \frac{d}{t}.	
		$$
		\item $\mathrm{val}(G) \leq \beta + \eta$.
		\item $G$ is $(\delta,2 \eta)$-fortified.
	\end{enumerate}
\end{lemma}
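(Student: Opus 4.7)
The plan is to decouple the randomness of the $d$ matchings from the randomness of the constraints, since they are independent: Parts 1 and 2 concern only the random bipartite graph $\mathcal{G}$, while Parts 3 and 4 follow, conditional on Parts 1 and 2, from Chernoff-plus-union-bound arguments over the random constraints. Part 1 is a first-moment computation: $d$-regularity is immediate from the matching construction, and for parallel edges the expected intersection of any two matchings equals $t \cdot (1/t) = 1$, giving $\binom{d}{2}$ expected duplicated edge occurrences in total, so Markov yields at most $100 d^2 \leq 200 d^2$ parallel edges with probability at least $1 - 1/200$.

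Part 2 is the central graph-theoretic step. I would invoke the known fact that the union of $d$ random perfect matchings on $[t] \times [t]$ forms a bipartite $d$-regular graph whose second singular value satisfies $\lambda_2 \leq C \sqrt{d}$ with constant probability (via a Friedman-type theorem, or a direct trace-method computation). The bipartite expander mixing lemma then gives
$$\bigl| |E \cap (S \times T)| - d|S||T|/t \bigr| \;\leq\; \lambda_2 \sqrt{|S||T|}.$$
Dividing by $|S||T|$ and using $\sqrt{|S||T|} \geq \delta t$, the relative error is at most $\lambda_2/(d\delta) = O(1/(\sqrt{d}\,\delta))$, which is at most $\eta$ precisely when $d = \Omega(1/(\eta^2 \delta^2))$, matching the hypothesis.

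Parts 3 and 4 are random-constraint calculations, conditional on Parts 1 and 2. For Part 3, fix any strategy $\psi = (\psi_X, \psi_Y)$; since each $\pi_e$ is an independent random $\beta$-dense subset of $\Sigma \times \Sigma$, the indicators ``$\psi$ satisfies $e$'' are i.i.d.\ Bernoulli$(\beta)$ across $e \in E$, so Chernoff gives deviation $>\eta$ with probability $\exp(-\Omega(\eta^2 d t))$. Union-bounding over $|\Sigma|^{2t}$ strategies succeeds when $d = \Omega(\log|\Sigma|/\eta^2)$, and actually yields the two-sided bound $\val(G) \in [\beta-\eta,\beta+\eta]$. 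For Part 4, by Part 2 every rectangle $(S,T)$ with $|S|, |T| \geq \delta t$ has $|E \cap (S \times T)| \geq (1-\eta) d \delta^2 t$; for a fixed $(\psi_S, \psi_T)$ and fixed $(S,T)$, Chernoff bounds the restricted value above $\beta + \eta$ with probability $\exp(-\Omega(\eta^2 d \delta^2 t))$, and union-bounding over $2^{2t}$ rectangles and $|\Sigma|^{2t}$ restricted strategies goes through exactly when $d = \Omega((1+\log|\Sigma|)/(\eta^2 \delta^2))$, matching the hypothesis. Combining with $\val(G) \geq \beta - \eta$ then gives $\val(G_{S \times T}) \leq \beta + \eta \leq \val(G) + 2\eta$, i.e., $(\delta, 2\eta)$-fortification.

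The main obstacle is Part 2: a naive Chernoff plus $2^{2t}$-way union bound over $(S,T)$-pairs would force $d$ to grow with $t$, so the spectral bound $\lambda_2 = O(\sqrt{d})$ for unions of random perfect matchings is essential to obtain a hypothesis on $d$ that is independent of $t$. Everything else is routine calibration of Chernoff against the $(2|\Sigma|)^{2t}$ union-bound factor, which is precisely why the hypothesis takes the clean form $d \gtrsim (1 + \ln|\Sigma|)/(\eta^2 \delta^2)$; summing failure probabilities over the four parts and tuning constants yields the overall $0.99$ guarantee.
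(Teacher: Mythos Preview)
Your proposal is correct, and Parts~1, 3, and~4 match the paper's argument essentially verbatim. For Part~2, however, you take a different route: you invoke a Friedman-type spectral bound $\lambda_2 = O(\sqrt{d})$ for the permutation model together with the bipartite expander mixing lemma, whereas the paper does precisely what you dismiss as infeasible---a direct concentration bound plus a $2^{2t}$-way union bound over rectangles. The paper's device is an edge-exposure Doob martingale: revealing the $dt$ edges of $M^1,\ldots,M^d$ one at a time, it shows that the count $|E \cap Z|$ has martingale differences bounded by~$4$ and applies Azuma to get failure probability $\exp(-\Omega(\eta^2 \mu^2 d t))$ for each fixed $Z$, where $\mu = |Z|/t^2$. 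The factor of $t$ in the exponent comfortably beats the $2^{2t}$ union-bound factor with $d$ depending only on $\eta$ and $\delta$, not on $t$. So your diagnosis that the $2^{2t}$ union bound ``would force $d$ to grow with $t$'' misidentifies the obstacle: the real issue with a naive Chernoff is the dependence among edges \emph{within} a single matching, and Azuma handles that. Your spectral approach is arguably cleaner, but it imports a nontrivial black box (a bipartite Friedman theorem for the permutation model), while the paper's martingale argument is elementary and self-contained.
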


Note that, if we set $\eps = 2 \eta$ and assume that $200d^2 < \eps |E|$, the games provided by Lemma~\ref{random_csp} satisfy the conditions we require in Theorem~\ref{main_thm}. Before proving the lemma, we prove a general lemma about the sampling properties of $d$ random perfect bipartite matchings.

\newcommand{\M}{\mathcal{M}}


\begin{lemma}[Random matchings sample well]
\label{lemma_random_permutation}
	Let $M^1,\ldots,M^d$ be $d$ perfect matchings on $[t] \times [t]$ sampled uniformly at random. Let $Z\subseteq [t]\times[t]$ be an arbitrary set, and let $\mu = |Z|/t^2$. Then with probability at least $1 - \exp(-\Omega(\rho^2 \mu^2 d t))$, $\left| | \bigcup_j M^j \cap Z| - \mu d  t \right | \leq \rho\cdot \mu d t$. 
\end{lemma}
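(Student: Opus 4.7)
The mean claimed in the statement, $\mu d t$, coincides with $\sum_j \mathbb{E}|M^j \cap Z|$: each random perfect matching contains each prescribed edge with probability $1/t$, so $\mathbb{E}|M^j \cap Z| = \mu t$. This already tells me the intended reading of $|\bigcup_j M^j \cap Z|$ is the \emph{multiset} union $\sum_{j=1}^d |M^j \cap Z|$ (or, equivalently, set union plus a negligible correction for parallel edges, which I will handle at the end). The overall plan is therefore to prove subgaussian concentration for a single matching using the random-permutation structure, and then combine via independence across matchings.

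Step 1 (single matching). Fix $j$ and write $M^j = \{(i,\sigma(i)) : i \in [t]\}$ for $\sigma$ a uniform permutation of $[t]$. With $a_{i,\ell} := \mathbf{1}[(i,\ell) \in Z]$, set $f(\sigma) := \sum_{i \in [t]} a_{i,\sigma(i)} = |M^j \cap Z|$. Transposing two coordinates of $\sigma$ alters at most two summands in $f$, so $f$ is $2$-Lipschitz under adjacent transpositions. McDiarmid's inequality for random permutations (equivalently, the Maurey--Aldous swap martingale on the symmetric group) then gives, for every $\tau > 0$,
$$
\Pr\!\left[\,\bigl||M^j \cap Z| - \mu t\bigr| > \tau \,\right] \;\leq\; 2\exp\!\bigl(-\Omega(\tau^2 / t)\bigr).
$$
Equivalently, $|M^j \cap Z|$ is subgaussian around $\mu t$ with proxy variance $O(t)$.

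Step 2 (aggregate over independent matchings). Because $M^1,\ldots,M^d$ are independent, $S := \sum_{j=1}^d |M^j \cap Z|$ is a sum of $d$ independent subgaussian random variables, hence subgaussian around $\mu d t$ with proxy variance $O(dt)$. Setting the deviation to $\tau = \rho \mu d t$ gives
$$
\Pr\!\bigl[\,|S - \mu d t| > \rho \mu d t\,\bigr] \;\leq\; 2\exp\!\Bigl(-\Omega\bigl((\rho\mu d t)^2 / (dt)\bigr)\Bigr) \;=\; 2\exp\!\bigl(-\Omega(\rho^2 \mu^2 d t)\bigr),
$$
which is exactly the advertised bound. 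If the statement is read as set-theoretic union rather than multiset union, the discrepancy is the number of edges of $Z$ that appear in two or more matchings; this is at most the total number of parallel edges in $\mathcal{G}$, whose expectation is $\binom{d}{2}/t^2 \cdot t^2 = O(d^2)$ and which concentrates at $O(d^2)$ by Markov/second-moment. Since the regime of interest has $d^2 \ll \rho \mu d t$ (otherwise the claim is vacuous), absorbing this additive slack into the constants in the exponent costs nothing.

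The main obstacle is Step 1: naive Hoeffding on $|M^j \cap Z| \in [0,t]$ with mean $\mu t$ would only produce $\exp(-\Omega(\rho^2 \mu^2 d))$, losing the crucial factor $t$. The factor is recovered because a uniformly random matching already acts as a Binomial-like sampler on $[t]\times[t]$—its ``effective variance'' is $O(\mu t)$, not $O(t^2)$—and the Lipschitz property of $f$ under transpositions is the clean way to capture that without doing the variance computation by hand. Everything downstream is a straightforward subgaussian sum, so once Step 1 is in place the combination into the $\exp(-\Omega(\rho^2\mu^2 dt))$ tail is immediate.
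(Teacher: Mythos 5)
Your proof is correct and reaches the same bound, but it routes through a cleaner decomposition than the paper. The paper builds a single Doob martingale over all $dt$ edge-revelation steps (revealing the edges of $M^1$ one by one, then $M^2$, and so on), and spends the bulk of its effort establishing the bounded-differences property $|X^j_i - X^j_{i-1}| < 4$ directly, via an explicit bijection $B:\M^j_0\to\M^j_1$ between sets of partial matchings and a coupling argument bounding $|\alpha(M)-\alpha(B(M))|$; Azuma then closes the argument in one line. You instead modularize: first invoke the standard McDiarmid/Maurey--Aldous concentration inequality for a function of a single uniformly random permutation that is $2$-Lipschitz under transpositions (this packages the bijection argument as a black box), obtaining per-matching subgaussianity with proxy variance $O(t)$, and then sum $d$ independent subgaussian variables to get proxy variance $O(dt)$ and the same $\exp(-\Omega(\rho^2\mu^2 dt))$ tail. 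The mathematical content is essentially identical --- the paper's bijection between matchings is exactly the engine inside the swap-martingale proof of permutation-McDiarmid --- so your version buys readability and modularity, while the paper's buys self-containment. Your observations on the side issues are also right: the paper does count with multiplicity (its $Y=\sum_j\sum_i Y^j_i$ makes that explicit), so your ``multiset union'' reading matches, and the $O(d^2)$ expected parallel-edge slack, handled separately in Lemma~\ref{few_parallel_edges}, is negligible in the regime where the claim is non-vacuous.
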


\begin{proof}
	We treat the selection of a random matching $M^j$ as a result of a random process where first, the edges of the complete bipartite graph $K_{t,t}$ are ordered randomly, and then the edges in $M^j \subset K_{t,t}$ are revealed one by one according to this random order. Let $E_i^j$ denote the $i$th revealed edge in $M^j$. Let $Y_i^j$ be the indicator variable for whether $E_i^j \in Z$. Let $Y=\sum_{j} \sum_i Y_i^j$. 
	Imagine a random process that first reveals all the edges of $M^1$ one at a time, then all the edges of $M^2$ one at a time, and so forth. Define a sequence of $td + 1$ random variables $X_0,X_1^1,\ldots,X_t^1,X_1^2,\ldots,X_t^2,\ldots,X_1^d,\ldots,X_t^d$, where $X_0 = \E [ Y ]$ and
	$$
		X_i^j = \E[ Y\mid E_{\leq (j,i)} ]
	$$
	where $E_{\leq (j,i)}$ denotes the sequence $E_1^1, \ldots, E_t^{j-1}, E_1^j, \ldots, E_i^j$, i.e., all the edges in matchings $M^1,\ldots,M^{j-1}$, and the first $i$ edges in matching $M^j$. By construction, the random variable sequence $\{X_i^j\}$ forms a Doob martingale with respect to the sequence $\{E_i^j\}$. We wish to apply Azuma's inequality to this to show that $Y$ is tightly concentrated about its mean, which is
\[ X_0 = \E[Y] = \sum_j \sum_i \E[Y_i] = \mu d t,\]
by linearity of expectation and the fact that the marginal distribution on each edge of $M^j$ is a uniformly random edge in $K_{t,t}$. In order to apply Azuma's inequality, we need to establish that $\max \{ |X_{i}^j - X_{i-1}^j|, | X_1^j - X_n^{j-1} | \} < c$ for some constant $c$. We argue that $c = 4$.

We introduce some notation that will be useful for us. Let $U^j_i$ denote the complete bipartite graph on all the vertices that haven't been ``paired'' up by the edges $E_1^j,\ldots,E_{i}^j$. In other words, it is the subgraph of $K_{t,t}$ where the edges $E_1^j,\ldots,E_i^j$, and all adjacent edges to them are removed. Let $\M^j_i$ denote the set of all perfect matchings on $U^j_i$. Note that, for all $i$, the matching $M^j$ is contained in $\M^j_i$. We will let $U^j_0$ denote $K_{t,t}$ and $\M^j_0$ to simply be the set of all perfect matchings on $K_{t,t}$. Finally, for all matchings (not necessarily perfect) $M$ of $K_{t,t}$, let $\alpha(M)$ denote $| M \cap Z|$.

Consider the difference $|X_1^j - X_n^{j-1}|$. Suppose that the edges in the sequence $E_{< (j,1)}$ -- i.e., all the edges in matchings $M^1,\ldots,M^{j-1}$ -- have been revealed. Then we have
\begin{align*}
	X^j_1 - X^{j-1}_t &= \E [ Y \mid E_{\leq (j,1)} ] - \E [ Y \mid  E_{< (j,1)} ] \\
				&= \sum_{j' \geq j} \E \left  [ \sum_i Y_i^{j'} \middle | E_{\leq (j,1)} \right ] - \E \left [ \sum_i Y_i^{j'} \middle |E_{< (j,1)} \right ] \\
				&= \E \left  [ \sum_i Y_i^j  \middle | E_{\leq (j,1)} \right ] - \E \left [ \sum_i Y_i^j \middle |E_{< (j,1)} \right ] \\
				&= \E \left  [ \sum_i Y_i^{j} \middle | E_1^j \right ] - \E \left [ \sum_i Y_i^{j} \right ]
\end{align*}
In the second line we used the linearity of expectation and the fact that, conditioned on $E_{< (j,1)}$, the random variables $Y_{i'}^{j'}$ are all fixed (i.e. revealing more edges from other matchings do not change their values) for all $i'$ and all $j' < j$. In the third line, we use that revealing an edge in matching $M^j$ does not affect the random variables $Y_i^{j'}$ for $j' > j$. We use the same reasoning in the fourth line; $Y^j_i$ is independent of the edges of $M^1,\ldots,M^{j-1}$.

Observe that, conditioned on $E^j_1$, we have that $M^j$ is a uniformly distributed matching in $\M^j_1$ adjoined with $E^j_1$ (since $\M^j_1$ technically contains submatchings). Without conditioning on $E^j_1$, $M^j$ is a uniformly distributed matching in $\M^j_0$. Thus we have the identities
$$
\E \left  [ \sum_i Y_i^{j} \middle | E_1^j \right ] = Y^j_1 + \left( \frac{1}{| \M^j_1 |} \sum_{N \in \M^j_1} \alpha(N) \right)
$$
and
$$
\E \left  [ \sum_i Y_i^{j} \right ] = \frac{1}{| \M^j_0 |} \sum_{M \in \M^j_0} \alpha(M).
$$
Define the mapping $B: \M^j_0 \to \M^j_1$  on matchings where, for all matchings $M \in \M^j_0$:
\begin{itemize}
	\item If $M$ contains $E^j_1$, then $B(M)$ is the submatching $M$ restricted to $U_1^j$.
	\item Else if $M$ contains $(a,d)$ and $(c,b)$ where $E_1^j = (a,b)$, then $B(M)$ is the submatching $M$ restricted to $U_1^j$ adjoined with $(c,d)$ (which was not in originally in $M$).
\end{itemize}
Fix an $M \in \M^j_0$. Suppose that $E^j_1 \in M$. Then $| \alpha(M) - \alpha(B(M)) | \leq 1$. Otherwise, $| \alpha(M) - \alpha(B(M)) | \leq 2$, because it could be that both $(a,d)$ and $(b,c)$ are in $Z$, and $(c,d)$ is not. 

Furthermore, observe that the map $B$ is onto, and for all $N \in \M^j_1$, the sizes of the preimages $B^{-1}(N) \subset \M^j_0$ are all the same. Then we have
\begin{align*}
\E \left  [ \sum_i Y_i^{j} \right ] = \frac{1}{| \M^j_1 |} \sum_{N \in \M^j_1} \frac{|\M^j_1|}{|\M^j_0|} \sum_{M \in B^{-1}(N)} \alpha(M) 
\end{align*}
so
\begin{align*}
	\left | X^j_1 - X^{j-1}_n \right | &= \left | \E \left  [ \sum_i Y_i^{j} \middle | E_1^j \right ] - \E \left [ \sum_i Y_i^{j} \right ] \right | \\
		&\leq | Y^j_1 | + \left | \frac{1}{| \M^j_1 |} \sum_{N \in \M^j_1} \left ( \alpha(N) - \frac{|\M^j_1|}{|\M^j_0|} \sum_{M \in B^{-1}(N)} \alpha(M) \right)  \right |  \\
		&\leq 1 + \frac{1}{| \M^j_1 |} \sum_{N \in \M^j_1} \frac{|\M^j_1|}{|\M^j_0|} \sum_{M \in B^{-1}(N)} \left | \alpha(B(M)) - \alpha(M) \right |  \\
		&\leq 3.
\end{align*}
The first inequality follows from triangle inequality, the second inequality follows from the fact that the number of $M \in B^{-1}(N)$ is equal to $|\M^j_0|/|\M^j_1|$, and the third inequality follows from our bound on the difference $\left | \alpha(B(M)) - \alpha(M) \right |$. 

Since this holds for every fixing of $E_{< (j,1)}$, this implies that $|X^j_1 - X^{j-1}_t| < 4$ with certainty. The same argument as above also implies that for all $i$, $|X^j_{i+1} - X^j_i | < 4$ with certainty. Hence, we can apply Azuma's inequality:
\[ \Pr ( |X^d_n - X_0| \geq \rho \cdot \mu d t) \leq 2\exp \left(-\frac{\rho^2 \mu^2 dt}{2c^2}\right). \]
We conclude the theorem by observing that $X^d_n$ is the number of edges in the union of the matchings $M^1,\ldots,M^d$ that fall within $Z$.
\end{proof}

%

\begin{lemma}
\label{few_parallel_edges}
Let $\mathcal{G} = ([t] \times [t],E)$ be a bipartite graph that is the union of $d$ random perfect matchings on $[t] \times [t]$. Then the probability that there are more than $200 d^2$ parallel edges in $E$ is less than $1/200$.
\end{lemma}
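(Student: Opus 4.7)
The plan is to bound the number of parallel edges by a sum of pairwise matching overlaps and then apply Markov's inequality. I would first observe that every parallel edge in the multiset $E = \bigcup_{j=1}^d M^j$ can be charged to an unordered pair $\{i,j\} \subseteq [d]$ of distinct matchings that share an edge: if a particular pair $(a,b) \in [t] \times [t]$ appears in exactly $k_{(a,b)}$ of the matchings, then it contributes $k_{(a,b)} - 1$ copies beyond the first (and belongs to $\binom{k_{(a,b)}}{2} \geq k_{(a,b)} - 1$ colliding matching pairs). Hence it suffices to show that
\[
  N := \sum_{1 \leq i < j \leq d} |M^i \cap M^j| \;\leq\; 200 d^2
\]
with probability at least $1 - 1/200$.

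Next I would compute $\E[N]$ exactly by linearity. For any two independent uniformly random perfect matchings $M^i, M^j$ on $[t] \times [t]$, and for any fixed edge $(a,b)$, the probability that $M^j$ matches $a$ to $b$ is exactly $1/t$ because the partner of $a$ under a uniformly random perfect matching is uniformly distributed on $[t]$. Conditioning on $M^i$ and summing over its $t$ edges gives $\E[|M^i \cap M^j| \mid M^i] = t \cdot (1/t) = 1$, hence $\E[|M^i \cap M^j|] = 1$ unconditionally and $\E[N] = \binom{d}{2} \leq d^2/2$. Markov's inequality then yields
\[
  \Pr[N > 200 d^2] \;\leq\; \frac{\E[N]}{200 d^2} \;\leq\; \frac{d^2/2}{200 d^2} \;=\; \frac{1}{400} \;<\; \frac{1}{200},
\]
which is the desired bound.

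There is no real obstacle here: $200 d^2$ is so much larger than the true expectation $\binom{d}{2}$ that Markov suffices and no concentration is needed (unlike in Lemma~\ref{lemma_random_permutation}, which uses Azuma). The only point requiring care is the bookkeeping in the first paragraph, namely checking that the ``colliding pairs'' count $\sum_e \binom{k_e}{2}$ dominates the ``extra copies'' count $\sum_e (k_e - 1)$, so that the reduction to bounding $N$ is valid under either natural convention for counting parallel edges.
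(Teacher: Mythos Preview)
Your proof is correct and follows essentially the same approach as the paper: both bound the number of parallel edges by the pairwise collision count $\sum_{j,j'} |M^j \cap M^{j'}|$, compute its expectation using $\E[|M^i \cap M^j|] = 1$, and finish with Markov's inequality. Your version is in fact slightly tighter (you use unordered pairs to get $\E[N] \leq d^2/2$ and hence probability $\leq 1/400$, whereas the paper sums over ordered pairs to get expectation $\leq d^2$ and probability $\leq 1/200$), and your explicit check that $\binom{k}{2} \geq k-1$ makes the reduction to bounding $N$ cleaner than the paper's one-line claim.
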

\begin{proof}
	Let $M^1,\ldots,M^d$ denote the matchings. For $1 \leq j,j' \leq d$, and $1 \leq i \leq n$, let $X_{j,j',i}$ denote the indicator variable that the $i$th left vertex gets matched to the same right vertex under matchings $M^j$ and $M^{j'}$. Note that $\E [X_{j,j',i} ] = 1/t$. Note that the number of parallel edges is at most $ \sum_{j, j'} \sum_i X_{j,j,i'}$, and thus the expected number of parallel edges is at most $d^2$. By Markov's inequality, the number of parallel edges is at most $200 d^2$ with probability at  least $1 - 1/200$.
\end{proof}

\begin{corollary}
\label{edge_density}
Let $0 < \delta, \rho < 1$, and let $d > 1/(\rho^2 \delta^2) + 2$. Let $\mathcal{G} = ([t] \times [t],E)$ be a bipartite graph that is the union of $d$ random perfect matchings on $[t] \times [t]$. Then with probability at least $1 - \exp(-\Omega( \rho^2 \delta^2 dt))$, for every $S, T \subseteq [t]$ where $|S|,|T| \geq \delta t$, we have that
$$
	\left | \frac{| E \cap (S \times T) |}{|S| |T|} - \frac{d}{t} \right | \leq \rho \frac{d}{t}.
$$
\end{corollary}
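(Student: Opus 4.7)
The plan is a direct union bound, using Lemma \ref{lemma_random_permutation} as the per-rectangle deviation inequality. First I would fix an arbitrary pair $S, T \subseteq [t]$ with $|S|, |T| \geq \delta t$ and apply Lemma \ref{lemma_random_permutation} to the set $Z = S \times T$. Setting $\mu := |Z|/t^2 = |S||T|/t^2 \geq \delta^2$, the lemma yields
\[
\bigl| |E \cap (S\times T)| - \mu d t \bigr| \leq \rho\, \mu d t
\qquad \text{with probability at least } 1 - \exp(-\Omega(\rho^2 \mu^2 dt)).
\]
Dividing through by $|S||T|$ and using $\mu d t / (|S||T|) = d/t$ converts this directly into the desired inequality
\(
\bigl| |E \cap (S\times T)| / (|S||T|) - d/t \bigr| \leq \rho \cdot d/t.
\)

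Second, I would union bound over all such pairs. Since $\mu \geq \delta^2$, the per-pair failure probability is at most $\exp(-\Omega(\rho^2 \delta^4 dt))$. The total number of pairs of subsets of $[t]$ is at most $4^t$, so by a union bound the overall failure probability is at most $4^t \exp(-\Omega(\rho^2 \delta^4 dt))$. The hypothesis $d > 1/(\rho^2 \delta^2) + 2$ is used to ensure that the exponent in the product dominates the $O(t)$ coming from $4^t$, giving the claimed bound of $\exp(-\Omega(\rho^2 \delta^2 dt))$ up to absolute constants.

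The main (and only nontrivial) obstacle is precisely this calibration between the per-rectangle exponent $\rho^2 \delta^4 d t$ and the cost $O(t)$ of the union bound over all $4^t$ subset pairs, since a naive application leaves a factor of $\delta^2$ slack relative to the claimed exponent. To close this gap I would sharpen the counting in one of two standard ways: (i) restrict the union bound to pairs $(S, T)$ with $|S| = |T| = \lceil \delta t \rceil$ (using a monotonicity/decomposition argument to show this suffices, since a larger set admits a tighter per-rectangle deviation bound that absorbs the extra binomial factor), or (ii) pass to an $\varepsilon$-net over subsets, observing that two nearby subsets $S, S'$ with $|S \triangle S'| \ll \delta t$ have edge counts differing by $O(d |S \triangle S'|)$, which is negligible against $\rho d |S| |T| / t$. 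Either refinement preserves the structure of the proof while matching the stated exponent.
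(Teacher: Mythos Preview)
Your approach is exactly the paper's: apply Lemma~\ref{lemma_random_permutation} with $Z = S \times T$ and union bound over the at most $2^{2t}$ pairs of subsets. The paper's proof is in fact a single sentence to this effect and does not engage with the $\delta^4$-versus-$\delta^2$ discrepancy in the exponent that you (correctly) flag; you are being more careful than the source here.
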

\begin{proof}
	This follows from Lemma~\ref{lemma_random_permutation} and union bounding over all $S, T \subseteq [t]$ such that $|S|,|T| \geq \delta t$ (of which there are at most $2^{2t}$).
\end{proof}

\noindent We now prove Lemma~\ref{random_csp}, which we restate here for completeness.

\begin{proof}[of Lemma~\ref{random_csp}]

By Lemma~\ref{few_parallel_edges} and Corollary~\ref{edge_density}, we have that with probability at least $199/200 - \exp(-\Omega(\rho^2 \delta^2 dt)) \geq 198/200$, the graph $\mathcal{G}$ is such that properties (1) and (2) of the lemma statement are satisfied. Call this event $H$.

We now argue that properties (3) and (4) are satisfied with high probability, conditioned on $H$. Define $m := t d$.
\begin{claim}
\label{random_makes_fortified}
Let $S \subseteq X$ and $T \subseteq Y$ be such that $|S|,|T| \geq \delta t$. The probability that there exist assignments $\psi_X: X \to \Sigma$ and $\psi_Y: Y \to \Sigma$ such that more than $2\eta$ fraction of the constraints $\pi_e$ such that $e \in E \cap (S \times T)$ are satisfied by $(\psi_X,\psi_Y)$, conditioned on $H$, is at most $\exp( - (\eta^2 \delta^2 d - 2 (\ln |\Sigma|) )t)$.
\end{claim}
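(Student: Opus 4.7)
The proof is a standard Chernoff-plus-union-bound argument, with the key observation that conditioning on $H$ (a property of the matchings only) leaves the constraints $\pi_e$ i.i.d.\ uniform density-$\beta$ subsets of $\Sigma\times\Sigma$. I read the claim's threshold as ``more than $\beta + 2\eta$ fraction'' (the $\beta$ appears to have been dropped from the typeset statement); this is the natural bound, since combining it with $\val(G) \geq \beta$ (which holds for every realization of the constraints, by a uniformly random labeling) yields $\val(G_{S\times T}) \leq \val(G) + 2\eta$ and hence the $(\delta, 2\eta)$-fortification asserted in Lemma~\ref{random_csp}.

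First I would use property~(2) of $H$ together with $|S|,|T| \geq \delta t$ to bound the number of edges in the rectangle from below:
\[
N := |E \cap (S \times T)| \;\geq\; (1-\eta)\,d\,|S|\,|T|/t \;\geq\; (1-\eta)\,\delta^2 d\, t \;\geq\; \tfrac{1}{2}\,\delta^2 d\, t.
\]
Next, fix a pair of restricted assignments $\psi_X : S \to \Sigma$ and $\psi_Y : T \to \Sigma$. For each $e=(x,y) \in E \cap (S\times T)$, the indicator $Z_e := \mathbf{1}[(\psi_X(x),\psi_Y(y)) \in \pi_e]$ is Bernoulli$(\beta)$, and independence of the $\pi_e$ across $e$ makes the $Z_e$ independent. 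Hoeffding's inequality applied to $Z := \sum_e Z_e$ (mean $\beta N$) with deviation $2\eta$ gives
\[
\Pr\!\big[\, Z \geq (\beta + 2\eta) N\,\big] \;\leq\; \exp(-8 \eta^2 N) \;\leq\; \exp(-4 \eta^2 \delta^2 d\, t).
\]
Finally, union bounding over the at most $|\Sigma|^{|S|}\cdot|\Sigma|^{|T|} \leq |\Sigma|^{2t}$ restricted assignments gives a total conditional failure probability of at most $|\Sigma|^{2t}\cdot\exp(-4 \eta^2 \delta^2 d\, t) \leq \exp\!\big(-(\eta^2 \delta^2 d - 2 \ln|\Sigma|)\, t\big)$, matching the bound in the claim.

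I do not expect any serious obstacle. The two things one must check carefully are (a) that conditioning on $H$ does not perturb the marginal distribution of the $\pi_e$ (since $H$ depends only on the matchings, which are sampled before and independently of the constraints), and (b) that the product form $\psi = (\psi_X,\psi_Y)$ together with independence of the $\pi_e$ indeed yields honest independence of the $Z_e$ across $e$; both are immediate from the sampling model in Lemma~\ref{random_csp}. The only numerical care needed is that $\eta < 1/2$ gives $(1-\eta) \geq 1/2$, and that the constant $8$ from Hoeffding comfortably absorbs the factor lost in the lower bound on $N$, leaving plenty of slack in front of $\eta^2\delta^2 d$ to match the claimed exponent.
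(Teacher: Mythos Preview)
Your proposal is correct and follows essentially the same route as the paper: lower-bound $|E\cap(S\times T)|$ using the expansion property in $H$, apply a Chernoff/Hoeffding tail bound for a fixed assignment, then union-bound over the $|\Sigma|^{2t}$ assignments. Your reading of the threshold as ``$\beta+2\eta$'' is the right fix for the typo; the paper's own proof in fact bounds the deviation by $\eta$ (so the threshold is $\beta+\eta$) and only uses $|E_{S\times T}|\geq \tfrac{1}{2}\delta^2 dt$ in the exponent, matching your computation up to harmless constants.
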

\begin{proof}
	Fix $\psi_X : X \to \Sigma$ and $\psi_Y: Y \to \Sigma$. Let $E_{S \times T}$ denote $E \cap (S \times T)$. We have that $|E_{S \times T}| \geq \delta td/2$.  Given a fixed assignment, the probability a randomly chosen constraint $\pi_e$ for an edge $e \in E_{S \times T}$ is satisfied by the assignment is $\beta$. Thus the expected fraction of satisfied edges is $\beta |E_{S,T}|$. By Chernoff, the probability that more than $(\beta + \eta) |E_{S \times T}|$, or less than  $(\beta - \eta) |E_{S \times T}|$ edges are satisfied is at most $\exp(- 2 \eta^2 |E_{S \times T}|) \leq \exp(-\eta ^2 \delta^2 m)$ by our condition on the size of $E_{S \times T}$.
	
	Union bounding over all $|\Sigma|^{2t} = \exp( 2 (\ln |\Sigma| )t) $ possible assignments $(\psi_X,\psi_Y)$, we have that the probability that there exists an assignment such that more than $2\eta |E_{S \times T}|$ edges are satisfied is at most $\exp( - (\eta^2 \delta^2 m - 2 (\ln |\Sigma|) t))$.
\end{proof}

Let $J_{S,T}$ denote the event that for all assignments $(\psi_X,\psi_Y)$, no more than $\beta + \eta$ fraction of edges in $E \cap (S \times T)$ are satisfied by $(\psi_X,\psi_Y)$. Let $J$ denote the event that $J_{S,T}$ holds for all $S, T$ of size at least $\delta n$. By union bound, the probability that $J$ does not hold is at most
\begin{align*}
	2^{2t} \cdot \exp( - (\eta^2 \delta^2 m - 2 (\ln |\Sigma|) t)) &= \exp( - (\eta^2 \delta^2 m - 2(1 + \ln |\Sigma| ) t)).
\end{align*}
Since
$$
	d > \max \left \{ \frac{2}{\delta} \ln \frac{1}{\delta} , \frac{4 (1 + \ln |\Sigma|)}{\eta^2 \delta^2} \right \} 
$$
then the probability that $J$ and $H$ both do not hold is at most
$$
	\Pr(\neg H) + \Pr(\neg J | H) \leq .99.
$$
But if $J$ and $H$ both hold, this implies that for all $S, T$ of size at least $\delta t$, the fraction of satisfiable edges is at between $\beta - \eta$ and $\beta + \eta$. Thus this implies that $G$ is $(\delta, 2\eta)$-fortified.
\end{proof}

\end{document}